\title{Computing Largest Subsets of Points Whose Convex Hulls have Bounded Area and Diameter}
\author{Gianmarco Picarella}{Department of Information and Computing Sciences, Utrecht University, The Netherlands}{g.picarella@students.uu.nl}{https://orcid.org/0009-0006-3748-3343}{}
\author{Marc van Kreveld}{Department of Information and Computing Sciences, Utrecht University, The Netherlands}{M.J.vanKreveld@uu.nl}{https://orcid.org/0000-0001-8208-3468}{}
\author{Frank Staals}{Department of Information and Computing Sciences, Utrecht University, The Netherlands}{f.staals@uu.nl}{https://orcid.org/0009-0004-8522-1351}{}
\author{Sjoerd de Vries}{Department of Information and Computing Sciences, Utrecht University, The Netherlands \and Department of Digital Health, University Medical Center Utrecht, The Netherlands}{s.devries1@uu.nl}{https://orcid.org/0000-0002-5306-6797}{}
\newcommand{\opt}{\mbox{\scriptsize{\rm {opt}}}}
\titlerunning{Computing Largest Subsets of Points Whose Convex Hulls are Bounded}
\authorrunning{G. Picarella, M. van Kreveld, F. Staals, and S. de Vries} 
\keywords{convex polygon, dynamic programming, implementation}  
\begin{document}

\maketitle
\begin{abstract}
  We study the problem of computing a convex region with bounded area and diameter that contains the maximum number of points from a given point set $P$. We show that this problem can be solved in $O(n^6k)$ time and $O(n^3k)$ space, where $n$ is the size of $P$ and $k$ is the maximum number of points in the found region.
  We experimentally compare this new algorithm with an existing algorithm that does the same but without the diameter constraint, which runs in $O(n^3k)$ time. For the new algorithm, we use different diameters.
  We use both synthetic data and data from an application in cancer detection, which motivated our research.
\end{abstract}

\section{Introduction}
\label{sec:Introduction}

Medical image data is examined by specialists to detect anomalies. For example, human tissue may contain mitotic cells,
where dense areas are indicative of the presence of breast cancer. Therefore, pathologists used to count mitotic cells in small areas to establish a risk and treatment regime. To automate this process, mitotic cells are nowadays identified from images and converted to (planar) point sets. Dense areas are regions of a certain size and shape that contain many points, which need to be found and inspected.

According to medical specialists, a region of interest is convex, has an area of $2$--$4$ mm$^2$, a bounded aspect ratio, and has the maximum number of points~\cite{umcmodel}. More generally, the identification of well-shaped, small, dense areas in point sets is a common data analysis task.

In this paper we address the problem of computing---for a point set $P$---a convex region with the largest number of points from $P$, given an upper bound on the area and on the diameter. We present both algorithmic and experimental results. Earlier research showed that a convex region with at least $k$ points and the minimum area can be computed using dynamic programming in $O(n^3k)$ time and $O(n^2k)$ space. This algorithm can be used to compute the maximum number of points in a convex region of at most a given area, but the diameter constraint is not taken into account. Alternatively, computing a convex region with bounded diameter and the maximum number of points can be done using furthest-site Voronoi diagrams in $O(nk^{2.5}\log k + n\log n)$ time and $O(nk)$ space~\cite{aiks}.
We show that with a dynamic programming method, we can ensure both a maximum area \emph{and} a maximum diameter in $O(n^6k)$ time and $O(n^3k)$ space.
A previous solution to this problem~\cite{umcmodel} takes exponential
time, which made it applicable only to very small point sets. Hence, that paper also presented a heuristic to find good but not necessarily optimal solutions more efficiently.

Experimentally, we investigate three types of point set: (i) uniformly distributed in a fixed square, where we consider different numbers of points (and hence densities); (ii) normally distributed in a fixed square, with different standard deviations; (iii) real-world medical data that was also used in \cite{umcmodel}.

We investigate different maximal diameters and compare our new algorithm to the existing algorithm by Eppstein, Overmars, Rote, and Woeginger~\cite{eorw} that does not use the diameter constraint. In theory, their algorithm has a much better worst-case running time, and it is possible that the found area satisfies the diameter condition without explicitly enforcing it. This would be the case if densest areas by our definition typically yield roundish regions rather than elongated ones. The experiments show that this is not the case, and we do not get the diameter constraint ``for free''. The solutions are elongated; they have a larger diameter than expected and than is allowed in the application.

The new algorithm, which enforces a bounded diameter, has a high worst-case running time. Fortunately, the bound on the diameter allows us to perform significant pruning, making the algorithm often faster in practice for not too dense point sets when compared to the algorithm that bounds the area only, despite being a factor $n^3$ slower in theory. Our experiments analyze this for different data sets and different settings of the diameter with respect to the area.

\subparagraph{Related work.}
The identification of dense areas---clusters---in point sets is a common task in many data analysis applications, as witnessed by the immensely popular DBSCAN method~\cite{ester1996density}. It is impossible to review the complete literature; we concentrate on algorithmic results related to this paper. These are 2-dimensional, use a clear definition of a dense area, compute exact results, and admit a running time analysis in the standard algorithmic sense.

Previous research has considered smallest area, perimeter, or diameter convex polygons that contain at least $k$ points. The mentioned paper \cite{eorw} addresses area or perimeter, and can find smallest polygons with $k$ points on the boundary, or $k$ points on the boundary while having no points inside, or $k$ points in total. In all cases, an $O(n^3k)$ time and $O(n^2k)$ space algorithm solves the problem. More generally, the method works whenever the optimization concerns a ``monotone decomposable weight function'': any function where the value for a convex polygon can be obtained from the values of two parts (and possibly the chord used to split in those parts).
Eppstein~\cite{eppstein} later presented an alternative method for the problem of computing a minimum area convex polygon with $k$ points inside that is more efficient when $k$ is sufficiently small. It runs in $O(n^2\log n+n^2k^3)$ time and $O(n\log n+k^3)$ space.

Around the same time, Arkin, Khuller and Mitchell considered geometric
knapsack problems~\cite{akm}, which include computing a minimum area
convex polygon containing $k$ points. They obtain essentially the same
$O(n^3k)$ time dynamic programming algorithm, but include some other
generalizations than the ones in~\cite{eorw}. 

The problem of computing a convex polygon that contains $k$ points and has minimum diameter uses a different approach. Note that the diameter is not a monotone decomposable weight function, and hence the approach in~\cite{eorw} does not work directly. Aggarwal, Imai, Katoh and Suri~\cite{aiks} propose an $O(k^{2.5}n\log{}k+n\log{}n)$ time and $O(kn)$ space algorithm for this problem using higher-order Voronoi diagrams.

For smallest $k$-point enclosing rectangle or circle algorithms, see~\cite{chan2021smallest,har2005fast}.

Our research was motivated by~\cite{umcmodel}, where the problem of computing a convex region with bounded area and bounded aspect ratio that contains most points was mentioned. The aspect ratio is the diameter-to-width ratio. The paper describes an automated mitosis detection pipeline that includes an algorithm for the problem mentioned. 
The algorithm first cover the region with points into square patches that partially overlap, and then finds solutions in these patches by one of two methods, a brute-force method and a heuristic.
\medskip

Note that bounding the aspect ratio is not the same as bounding the diameter, but in combination with a bounded area they realize the same goal of finding not too elongated areas. A constant upper bound on the aspect ratio guarantees a constant upper bound on the diameter, but the reverse is not true. If all points are (nearly) collinear, an upper-bounded aspect ratio cannot be attained by a convex hull of any subset (or perhaps only by a singleton point), whereas an upper-bounded diameter solution always exists, and might be desired.

\subparagraph{Organization.} This paper is organized as
follows. In Section~\ref{sec:algorithms} we review the algorithm from
\cite{eorw} that computes a smallest area convex set with $k$ points
inside. We also present our new algorithm that computes a smallest
area convex set with bounded diameter and $k$ points inside, and prove
its correctness and running time.
In Section~\ref{sec:experiments} we describe the experimental set-up, including the data used and implementation aspects. In Section~\ref{sec:results} we present the results of the experiments and discuss them. These results concern running time, memory used, number of points inside, and realized area and diameter of the solution.

\section{Algorithms}
\label{sec:algorithms}

In this section we present two dynamic programming algorithms to
compute minimum area polygons containing $k$ points. The former
algorithm appeared in \cite{eorw}; we repeat it since our new
algorithm can be seen as an extension of it. The latter is the new
algorithm, which incorporates a given diameter bound. Both algorithms are used in our experiments.

We start with some notation that is used in both algorithms. Let $P$ be a set of $n$ points in the plane, and assume we are searching for a solution containing $2\leq k\leq n$ points. Assume, for ease of description, that no three points in $P$ are collinear. Observe that a minimum area solution with at least $k$ points will always be a minimum area solution with exactly $k$ points, because if the solution had more than $k$ points from $P$, we can always exclude a corner of the convex solution and reduce its area, while reducing the number of points by one only.

For three points $a,b,c$, let $A(a,b,c)$ denote the area of the
triangle $\triangle abc$ whose corners are $a,b,c$, and let $K(a,b,c)$ denote the number of points from $P$ in this triangle. Points of $P$ on the boundary of $\triangle abc$ are not counted.

For a convex polygon $C$, the diameter is the distance between the two points of $C$ that realize the largest distance. Two points that realize the diameter are called a diametral pair. In our context, points will always be points that are in the set $P$, because $C$ is always the convex hull of some subset of the points in $P$. A pair of points $a,b$ of $C$ is antipodal if two parallel lines exist, one through $a$ and one through $b$, such that the interior of polygon $C$ lies between these lines (inside the slab). When the vertices of $C$ are in general position (no two sides are parallel), a convex polygon with $m$ vertices has $m$ antipodal pairs. If there are parallel sides, the number of antipodal pairs can be larger but is still $O(m)$. It is well-known that for a convex polygon $C$, its diameter is realized by an antipodal pair of vertices of $C$.

\begin{figure}[tb]
\centering
\includegraphics{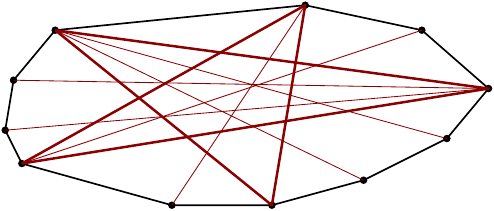}
\caption{The antipodal structure of a convex polygon. Bold chords form a five-pointed star in brown, the six thin brown edges connect to leafs and split star angles.}
\label{fig:antipodal}
\end{figure}

If we assume that no two sides of $C$ are parallel, the antipodal pairs have a clear structure. If we draw them as chords in $C$, then we get one odd-sided star plus zero or more edges that end in leaves, which always split an angle of a star vertex opposite the leaf, see Figure~\ref{fig:antipodal}. This is true because any two chords have an endpoint in common or they intersect properly. Note that an antipodal pair may be the endpoints of an edge that bounds $C$.

The standard method to find all antipodal pairs of a convex polygon in linear time is referred to as rotating calipers~\cite{toussaint1983solving}. It comes down to a traversal of the boundary of $C$ on opposite sides in parallel, using the slopes of the edges to decide on which side we make the next step and find a new antipodal pair. This general idea will be integrated into the dynamic programming algorithm for the bounded-diameter case.

\subsection{Computing a minimum area convex polygon containing $k$ points}

We describe the algorithm of Eppstein, Overmars, Rote and Woeginger~\cite{eorw}.

Take any point $p\in P$ and choose it to be the bottommost vertex of a set of candidate solutions. In other words, we search for the optimal solution among all convex polygons with $k$ points inside that have $p$ as the bottommost vertex. By iterating over all points $p\in P$, we will find the global optimum with a linear factor extra in the time bound.

Given the bottommost point $p$ in the solution, we can use a recursive structure when setting up the dynamic program by considering a bottom-vertex triangulation. This is a triangulation where all vertices on the boundary connect to $p$. A convex polygon $C$ is now partitioned into a fan of triangles around $p$ and above it. 

Assume the vertices are listed in clockwise order around $C$ as $p, p_1, \ldots, p_m$. Then the last (most clockwise) triangle in the fan around $p$ is $\triangle pp_{m-1}p_m$. 

If $C$ is an optimal solution with $k$ points from $P$, then the
convex polygon $p, p_1, \ldots, p_{m-1}$ is an optimal solution with
$k-K(p,p_{m-1},p_m)-1$ points among the solutions that have $p_{m-1}$ as
the counterclockwise neighbor of $p$ and where $p_{m-2},p_{m-1},p_m$
makes a right turn (is convex at $p_{m-1}$ in $C$). This suggests a
3-dimensional table $T[q,r,k]$, defined as: the minimum area of a
convex polygon with $p$ as the bottommost vertex, $q$ as its ccw (counterclockwise)
neighbor, and $r$ as the ccw neighbor of $q$. See
Figure~\ref{fig:eppstein}. This leads to a standard dynamic program with the recurrence:
\[T[q,\,r,\,k] = A(p,q,r) + \min_{s}\;T[r,\,s,\,k-K(p,q,r)-1]\,,\]
where $s$ is chosen over the points above the horizontal line through $p$, left of the directed line from $p$ to $r$, and that leads to convexity at $r$. Since $K(p,q,r)$ counts only points strictly inside $\triangle pqr$, we subtract one more for point $q$. As the base case we let $T[q,\,r,\,k]=A(p,q,r)$ if triangle $\triangle pqr$ has exactly $k$ points, $p$, $q$, and $r$ included.

It is easy to see that filling an entry takes $O(n)$ time, and the whole dynamic program requires $O(n^3k)$ time. Together with the fact that we try all $n$ points as the bottommost point $p$, the whole algorithm takes $O(n^4k)$ time.

In \cite{eorw} a technique is described that reduces the running time by a linear factor. A similar table is used, assuming a given point $p$ as bottommost point, but with a subtly different definition: $T[q,r,k]$ is the smallest area convex polygon with $k$ points inside where $p$ is the bottommost vertex, $q$ is its ccw neighbor, and the convex polygon lies on the same side of the line through $q$ and $r$ as $p$ does. Note that $r$ might not be part of the convex polygon itself.
With a rotational sweep around $q$, entries are filled in only $O(1)$ time per point $r$. We consider all points $r$ above a horizontal line through $p$, not just the ones left of the directed line through $p$ and $q$. We consider the full lines through $r$ and $q$ and order them by angle around $q$, starting clockwise from the line through $p$ and $q$.

\begin{figure}[tb]
  \centering
  \includegraphics{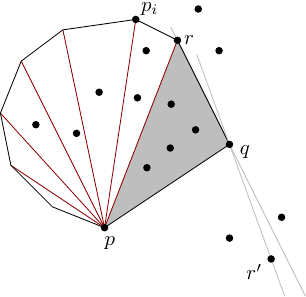}
  \caption{An optimal solution can be decomposed in $\triangle pqr$ and an
    optimal convex polygon with area
    $T[r,\,p_i,\,k-K(p,q,r)-1]$. Table entries can be computed in
    $O(1)$ time by rotating around $q$.  }
  \label{fig:eppstein}
\end{figure}

The rotational sweep maintains a value $T_q^*$ that is the minimum area encountered so far with $k$ points. The sweep has two cases: $(i)$ Point $r$ lies left of the directed line through $p$ and $q$ (in that order), in which case we may use $T[q,\,r,\,k-K(p,q,r)-1]$ to potentially find a smaller $T_q^*$, in which case we update $T_q^*$ and fill $T[q,r,k]$;
$(ii)$ Point $r$ lies to the right of the directed line, in which case we fill
$T[q,r,k]$ with $T_q^*$. This leads to an $O(n^3k)$ time
algorithm. Details can be found in \cite{eorw}.

\subsection{Computing a minimum area convex polygon with bounded diameter containing $k$ points}

In this section we show how to compute a minimum area convex polygon with bounded diameter and $k$ points inside. We do this by trying all pairs of points $p,q,\in P$, assuming they form the diameter, and then computing an optimal convex polygon under this assumption. The algorithm is an extension of the basic, $O(n^4k)$ time algorithm from~\cite{eorw}.
At the end of this section we show how to use this algorithm to find a convex polygon with bounded diameter and area and the maximum number of points inside.

\begin{figure}[htb]
\centering
\includegraphics{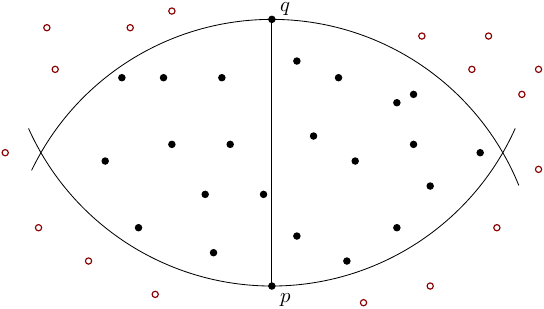}
\caption{The black points lie inside the lune, whereas the red points
  do not. The latter are ignored in the dynamic program that assumes
  $p,q$ is the diameter.}
\label{fig:lune}
\end{figure}

After choosing a pair $p,q$, we first select all points of $P$ that are at most $|pq|$ from both $p$ and from $q$. They are the points in a lune formed by the intersection of two disks with radius $|pq|$, centered at $p$ and at $q$, see Figure~\ref{fig:lune}. For ease of description and without loss of generality, we let $pq$ be vertical with $p$ below $q$. We now need to form the right chain of a convex polygon, clockwise (cw) from $q$ to $p$, and the left chain, clockwise from $p$ to $q$. While doing so, we must make sure that we do not consider solutions that use a point $\ell$ on the left chain that is too far from a point $r$ on the right chain. Note that two points on the right chain can never be too far from each other, and the same is true for the left chain (since we restricted our points to the ones inside the lune).

To solve the problem for a given $p,q$ we combine the dynamic
programming solution from \cite{eorw} with a rotating calipers
algorithm to make sure that all antipodal pairs (one from the left
chain and one from the right chain) have distance no more than $|pq|$.


Let $R[r,r',\ell,k]$ be the minimum area of a convex polygon with $pq$ as a diameter, where $r$ is the ccw neighbor of $p$, $r'$ is the ccw neighbor of $r$, $\ell$ is the ccw neighbor of $q$, that has exactly $k$ points inside, and where $\ell$ is antipodal to $r$ and to $r'$.
Analogously, let $L[\ell,\ell',r,k]$ be the minimum area of a convex polygon with $pq$ as a diameter, where $\ell$ is the ccw neighbor of $q$, $\ell'$ is the ccw neighbor of $\ell$, $r$ is the ccw neighbor of $p$, that has exactly $k$ points inside, and where $r$ is antipodal to $\ell$ and to $\ell'$. 
We allow $r'$ to be $q$ and $\ell'$ to be $p$. Note that $p$ and $q$ are antipodal: there is a pair of horizontal lines through $p$ and through $q$ that have any convex polygon with $p,q$ as diameter in between. We study antipodality more closely in the following lemma.

\begin{restatable}{lemma}{lemAntipodalStructure}
  \label{lem:antipodal_structure}
Let $C$ be a convex polygon with $p,q$ as diameter, with $r$ as ccw neighbor of $p$ and $\ell$ as ccw neighbor of $q$. Furthermore, assume that $C$ has no two sides that are parallel. Then we have at least one of the following cases:
\begin{itemize}
    \item $r$ is antipodal to only $q$, to only $\ell$, or to both but no other points.
    \item $\ell$ is antipodal to only $p$, to only $r$, or to both but no other points.
\end{itemize}
\end{restatable}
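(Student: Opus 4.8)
The plan is to run the rotating-calipers traversal of $C$ starting from the antipodal pair $(p,q)$ and to follow only its first few steps, which already pin down the antipodal partners of $r$ and of $\ell$. Since $pq$ is vertical with $p$ at the bottom and $q$ at the top, a horizontal pair of supporting lines certifies that $p$ and $q$ are antipodal; I take these horizontal lines as the starting position of the calipers and rotate them counterclockwise. Reading the boundary counterclockwise as $p, r, r', \dots, q, \ell, \ell', \dots, p$, the lower caliper leaves $p$ towards $r$ and the upper caliper leaves $q$ towards $\ell$. Throughout I use two standard facts: the antipodal partners of a fixed vertex form a contiguous arc of the boundary, and, because no two sides of $C$ are parallel, at every step of the traversal exactly one of the two calipers advances to its next vertex, with no ties.

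First I would record which caliper advances first out of the starting position $(p,q)$. This splits the argument into two symmetric cases: Case A, where the lower caliper advances and the pair becomes $(r,q)$ (so $r$ is antipodal to $q$); and Case B, where the upper caliper advances and the pair becomes $(p,\ell)$ (so $\ell$ is antipodal to $p$). I would carry out Case A in full and obtain Case B by exchanging the roles of $p\leftrightarrow q$, $r\leftrightarrow\ell$, and the two calipers, which maps the first bullet (i) to the second bullet (ii) and back.

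In Case A I would simply continue the traversal and read off the pairs. From $(r,q)$ either the lower caliper advances to $(r',q)$, in which case $r$ has left the traversal with partner set exactly $\{q\}$ and (i) holds; or the upper caliper advances to $(r,\ell)$, giving $r$ the partners $\{q,\ell\}$ and making this $\ell$'s first appearance, with $p$ provably not among $\ell$'s partners (in Case A the upper caliper reaches $\ell$ only after the lower one has already left $p$, so $\ell$'s first partner is $r$ or later). From $(r,\ell)$ there are again two possibilities: if the lower caliper advances to $(r',\ell)$ then $r$ leaves with partner set $\{q,\ell\}$ and (i) holds; if instead the upper caliper advances to $(r,\ell')$, then $r$ is antipodal to $\ell'$ as well, so (i) fails, but at that same moment $\ell$ has left the traversal with partner set exactly $\{r\}$, so (ii) holds. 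In every branch at least one of (i), (ii) holds, and Case B is identical after the symmetry exchange (there the surviving branch yields ``$r$ antipodal to only $\ell$'', which is one of the options permitted by the first bullet).

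The main thing to get right is the exclusivity built into this step-by-step reading: the two potential failure modes — $r$ reaching a vertex beyond $\ell$, and $\ell$ reaching a vertex beyond $r$ — are both governed by the single slope comparison that decides which caliper advances out of $(r,\ell)$, so they cannot occur together. Concretely, $r$'s arc passes $\ell$ precisely when the upper caliper advances out of $(r,\ell)$ before the lower one, and that is exactly the event that closes $\ell$'s arc at $\{r\}$; hence (i) failing forces (ii). The only points needing care are the degenerate endpoints the statement explicitly allows ($r'=q$ or $\ell'=p$), and the claim that $p$ is not a partner of $\ell$ in Case A (and symmetrically $q$ not a partner of $r$ in Case B); both follow from the monotone, one-at-a-time advancement of the calipers together with the absence of parallel sides.
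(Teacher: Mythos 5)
Your argument is correct and is essentially the paper's own proof: the same counterclockwise rotating-calipers traversal started from the horizontal supporting lines through $p$ and $q$, with the same three-step case analysis of which caliper advances first (and the same use of the no-parallel-sides assumption to rule out ties), concluding that the branch in which $r$'s arc passes $\ell$ is exactly the branch that closes $\ell$'s arc at $\{r\}$. The only addition in the paper is a one-line dismissal of the degenerate case $\ell=p$ or $r=q$ before assuming the four vertices $p,q,r,\ell$ are distinct.
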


\begin{proof}
If $\ell$ is $p$, then the lemma is clear, and the same is true if $r$ is $q$. So we assume that the four vertices $p,q,r,\ell$ are distinct.

Let $r'$ be the ccw neighbor of $r$ and $\ell'$ the ccw neighbor of $\ell$. Using a rotating caliper argument, we know that $p,q$ are antipodal, and either $r$ is antipodal to $q$, or $\ell$ is antipodal to $p$, but not both. This is true because when we rotate the lines through $p$ and $q$ in ccw direction, then either the line through $q$ encounters $\ell$ first, or the line through $p$ encounters $r$ first. If both happen at the same time, then $C$ has parallel sides, violating the assumption.

Since the cases are symmetric, we assume that $r$ is antipodal to $q$. Continuing the rotating caliper argument on the parallel lines through $q$ and through $r$ in ccw direction, either the line through $q$ encounters $\ell$ first, or the line through $r$ encounters $r'$ first. In the latter case, $r$ is antipodal to only $q$. In the former case, we continue the rotating caliper once more: Either $\ell'$ is encountered first, or $r'$ is encountered first. In the former case, $\ell$ is antipodal to only $r$, and in the latter case, $r$ is antipodal to $q$ and $\ell$ but to no other vertices of $C$. 

Together with the symmetric cases, the lemma follows (also when $\ell'=p$ or $r'=q$).
\end{proof}

From the lemma above, it follows that $\ell$ is not antipodal to $r'$ or $r$ is not antipodal to $\ell'$.
Now consider an optimal solution $C_{\opt}$ with a given diameter $pq$, and let $\ell$, $r$, $\ell'$ and $r'$ be as in the proof. If $\ell$ is not antipodal to $r'$, then we can see $C_{\opt}$ as being composed of a triangle $\triangle \ell\ell'q$ and an optimal subsolution $C'_{\opt}$ where $\ell'$ is the ccw neighbor of $q$. Since $\ell$ is antipodal to no more than $r$ and possibly $p$, by induction we can assume that $C'_{\opt}$ has all antipodal pairs no larger than the diameter $|pq|$, we need to test only the distance $|\ell r|$ to conclude that $C_{\opt}$ also has all antipodal pairs close enough. If $r$ is not antipodal to $\ell'$, the argument is symmetric.

\begin{figure}[htb]
\centering
\includegraphics{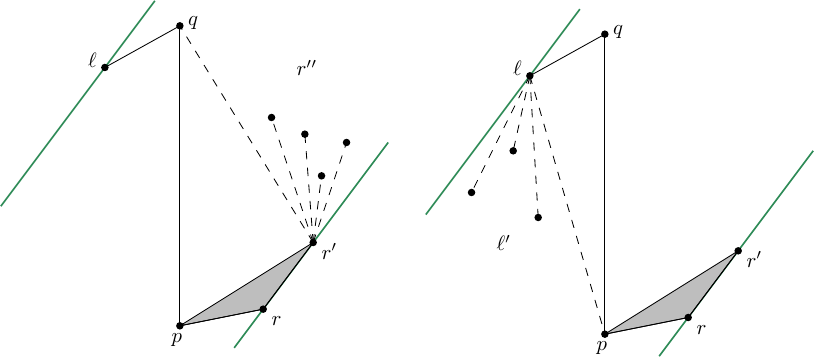}
\caption{The two cases in the recurrence that defines $R[r,r',\ell,k]$. The grey triangle is left out when using the optimal substructure.}
\label{fig:leftright}
\end{figure}




If $|r\ell|>|pq|$ then $R[r,.,\ell,.]=\infty$. Similarly, if $|r'\ell|>|pq|$ then $R[.,r',\ell,.]=\infty$.

The recurrence optimizes over two options: use a subproblem of the $R$-type or use a subproblem of the $L$-type. 
The optimization of the $R$-type using a step on the right side is:
\[R[r,\,r',\,\ell,\,k]= A(p,r,r')+ \min_{r''} R[r',\,r''\,,\ell,\,k-K(p,r,r')-1]\,.\]
Here the $r''$ must be such that $r''r'r$ is a right turn at $r'$, vertex $r''$ is right of the line through $r'$ and $q$, and $|r''\ell|\leq |pq|$. In such a case, $r'$ is antipodal to only $\ell$. We can choose $r''$ to be $q$. If $r'=q$, then this option cannot be chosen because the right boundary is finished. Finally, $\ell$ can be $p$ to capture the case that the left side is empty.

The optimization of the $R$-type using a step on the left side is:
\[R[r,\,r',\,\ell,\,k]= A(p,r,r')+ \min_{\ell'} L[\ell,\,\ell'\,,r'\,,k-K(p,r,r')-1]\,.\]
Here, $\ell'$ must be such that $q\ell\ell'$ is a left turn at $\ell$, vertex $\ell'$ is to the left of the line through $\ell$ and $p$ and right of the line through $\ell$ that is parallel to the line through $r$ and $r'$ (showing that $\ell'$ and $r$ are not antipodal), and $|\ell'r'|\leq |pq|$.
We can choose $\ell'$ to be $p$, which would complete the left boundary. Also here, $r'$ might be $q$.

As base case of the recursion, we take all triangles including the
points $p$ and $q$. In particular, we define $R[r_1,q,p,k]=A(p,r_1,q)$
if $k=3+K(p,r_1,q)$, and $\infty$ otherwise.

The definition of $L[\ell,\ell',r,k]$ is symmetric, having analogous
options. Note that since $q$ is the unique highest point and $p$ the
unique lowest point, ``left'' and ``right'' of any line through $p$
(or $q$) and another point is well-defined.


\begin{figure}[htb]
\centering
\includegraphics{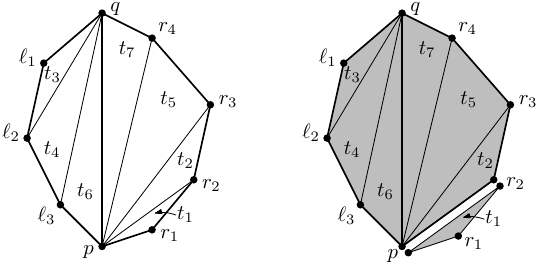}
\caption{A solution $C$ seen as a decomposition into triangles that are consecutively removed.}
\label{fig:dp-order}
\end{figure}

Intuitively, the algorithm constructs an optimal solution as follows,
see Figure~\ref{fig:dp-order}.  Consider a \emph{triangulation} of the
solution $C$ for a given diameter $p,q$ where $p$ has chords to all
vertices on the right and $q$ has chords to all vertices on the
left. Let $p=r_0,r_1,\ldots,r_j=q$ be the vertices incident to edges
on the right boundary. Then the edges $r_1r_2,\ldots,r_{j-1}r_j$ are
the sequences of edges of the triangles in the right half that are
opposite to $p$ in these triangles. Let $q=\ell_0,\ldots,\ell_h=p$ be
defined analogously for the left half.

Consider the sorted sequence (by positive turn angle from the positive
$x$-axis vector) of edges $\vec{e}_1,\ldots,\vec{e}_{j+h-2}$ from
$\overrightarrow{r_1r_2},\ldots,\overrightarrow{r_{j-1}r_j}$ and
$\overrightarrow{\ell_2\ell_1},\overrightarrow{\ell_3\ell_2},\ldots,\overrightarrow{\ell_h\ell_{h-1}}$. This
sorted sequence defines an order on the triangles in the left and
right halves of $C$ that is a merge of the two sequences.  The
solution $C$ can be deconstructed by peeling off the triangles
according to that ordered sequence. If triangle $\triangle pr_1r_2$ is
first in the sequence, then $C$ is composed of this triangle and the
convex polygon $C$ where that triangle is removed. As we argue next,
this solution is found when computing $R[r_1,r_2,\ell_1,k]$, where $k$
is the number of points of $P$ in $C$.

\begin{restatable}{lemma}{lemLAntipodal}
  \label{lem:l1_antipodal}
  Let $C$ be an optimal solution, and let $t_1=\triangle pr_1r_2$ be
  the first triangle in the triangulation as described above. Vertex
  $\ell_1$ is antipodal to $r_1$ and $r_2$.
\end{restatable}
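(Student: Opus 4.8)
The plan is to re-run the rotating-calipers process that already appears in the proof of \Cref{lem:antipodal_structure}, and to read off both desired antipodal pairs from a single transition of the lower caliper. Keep coordinates with $pq$ vertical and $p$ below $q$, and start the calipers in the horizontal double-support position, where $(p,q)$ is witnessed (recall $p$ and $q$ are antipodal). Rotating both support lines counterclockwise, let $u(\theta)$ be the contact vertex on the lower line and $v(\theta)$ the contact vertex on the upper line, where $\theta$ is the turn angle of the lines from the positive $x$-axis. I would first record the standard monotonicity: $u(\cdot)$ advances along the right chain $p=r_0,r_1,r_2,\dots$ and $v(\cdot)$ along the left chain $q=\ell_0,\ell_1,\ell_2,\dots$, with the lower caliper passing from $r_i$ to $r_{i+1}$ exactly at $\theta=\angle\overrightarrow{r_ir_{i+1}}$ and the upper caliper passing from $\ell_i$ to $\ell_{i+1}$ exactly at $\theta=\angle\overrightarrow{\ell_{i+1}\ell_i}$ (the reversed left edges, which is why the sorted sequence uses these directions). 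Since no two sides of $C$ are parallel, at every angle the contact pair $(u(\theta),v(\theta))$ is a single well-defined antipodal pair, and every antipodal pair arises this way.

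Next I would translate the hypothesis that $t_1=\triangle pr_1r_2$ is first into statements about these turn angles and locate the upper contact at the instant the lower contact leaves $r_1$. By convexity, $\overrightarrow{pr_1}$ has strictly smaller turn angle than $\overrightarrow{r_1r_2}$, so the lower caliper has already reached $r_1$ before $\theta=\angle\overrightarrow{r_1r_2}$. Being first in the sorted sequence, $\overrightarrow{r_1r_2}$ has the smallest turn angle among all peel edges, so in particular $\angle\overrightarrow{r_1r_2}\le\angle\overrightarrow{\ell_2\ell_1}$, meaning the upper caliper has not yet left $\ell_1$ at $\theta=\angle\overrightarrow{r_1r_2}$. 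Provided I can also show the upper caliper has already advanced off $q$ onto $\ell_1$ by this angle, it follows that $v(\theta)=\ell_1$ on a neighborhood of $\angle\overrightarrow{r_1r_2}$ while $u(\theta)$ switches from $r_1$ just before to $r_2$ just after. Then both $(r_1,\ell_1)$ and $(r_2,\ell_1)$ are contact pairs of the calipers, i.e.\ $\ell_1$ is antipodal to $r_1$ and to $r_2$, which is exactly the claim. Here \Cref{lem:antipodal_structure} does the auxiliary work of guaranteeing that $\ell_1$ and $r_1$ are each antipodal to only a bounded set of vertices, so that these contacts are genuine and the pairing is unambiguous.

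The hard part will be the remaining inequality of the previous paragraph: showing that the upper caliper has already advanced from $q$ to $\ell_1$ by the time $\theta=\angle\overrightarrow{r_1r_2}$, i.e.\ that $\angle\overrightarrow{\ell_1 q}\le\angle\overrightarrow{r_1r_2}$. This does not follow from the sorted-order hypothesis directly, because the transition $q\to\ell_1$ corresponds to the edge $\overrightarrow{\ell_1 q}$, which is deliberately excluded from the peel sequence. My approach would be to rule out the alternative interleaving, in which the lower contact runs $p\to r_1\to r_2$ while the upper contact is still $q$: in that configuration $r_1$ would be antipodal to $q$ only, and I would derive a contradiction by combining the case analysis of \Cref{lem:antipodal_structure} applied to $r_1$ and $\ell_1$ with the assumption that $pq$ realizes the diameter, since that is what constrains how far the upper chain can lag behind the lower one. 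Once this interleaving is excluded, the conclusion is immediate. The only bookkeeping left is the boundary cases that the recurrence explicitly permits, namely $\ell_1=p$ (empty left chain) and $r_2=q$ (the first triangle already reaches the top of the right chain), where the antipodality claims degenerate to the already-established antipodality of $p$ and $q$.
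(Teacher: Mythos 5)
Your strategy is in substance the same as the paper's: showing that the upper caliper contact is $\ell_1$ at the instant $\theta=\angle\overrightarrow{r_1r_2}$ when the lower contact passes from $r_1$ to $r_2$ is exactly equivalent to the paper's claim that the halfplane beyond the line through $\ell_1$ parallel to $\overrightarrow{r_1r_2}$ contains no vertex of $C$. Your condition (b) --- the upper contact has not yet left $\ell_1$ because $\angle\overrightarrow{r_1r_2}\le\angle\overrightarrow{\ell_2\ell_1}$ --- is correctly derived from the minimality of $\overrightarrow{r_1r_2}$ in the sorted sequence, and corresponds to the case the paper's proof actually argues (a vertex $\ell_{i+1}$ with $i\ge 1$ lying beyond the parallel line). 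So far, so good.

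The problem is the half you defer, and it is not a bookkeeping issue: the inequality $\angle\overrightarrow{\ell_1q}\le\angle\overrightarrow{r_1r_2}$ is the entire remaining content of the lemma, and the plan you sketch cannot close it. Lemma~\ref{lem:antipodal_structure} is purely combinatorial --- it restricts which vertices $r$ and $\ell$ can be antipodal to, but carries no quantitative information relating $\angle\overrightarrow{\ell_1q}$ to $\angle\overrightarrow{r_1r_2}$ --- and the interleaving you want to exclude (lower contact running $p\to r_1\to r_2$ while the upper contact is still $q$) is geometrically realizable together with every hypothesis of the lemma. Concretely, take $p=(0,0)$, $q=(0,10)$, $r_1=(1,0.1)$, $r_2=(1.98,0.27)$, $\ell_1=(-1,1)$: this pentagon is convex, $pq$ is its unique diametral pair, and the peel angles are $\angle\overrightarrow{r_1r_2}\approx 10^\circ<\angle\overrightarrow{r_2q}\approx 101^\circ<\angle\overrightarrow{p\ell_1}=135^\circ$, so $t_1=\triangle pr_1r_2$ is first; yet $\angle\overrightarrow{\ell_1q}\approx 84^\circ>\angle\overrightarrow{r_1r_2}$, the antipodal pairs are $(p,q),(r_1,q),(r_2,q),(r_2,\ell_1),(q,\ell_1)$, and $\ell_1$ is \emph{not} antipodal to $r_1$. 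So there is no contradiction to be derived; you have instead isolated precisely the case that the paper's own proof passes over (it tacitly assumes the vertex of $C\cap h^+$ furthest from the line through $r_1r_2$ is some $\ell_{i+1}$ with $i\ge 1$, whereas it can be $q=\ell_0$ itself, for which $\ell_i$ is undefined). A correct write-up must either add a hypothesis excluding this configuration or argue around it --- e.g., observe that when the upper contact is still $q$ throughout $[\angle\overrightarrow{pr_1},\angle\overrightarrow{r_1r_2}]$, the only vertex antipodal to $r_1$ is $q$, whose distance is already controlled by the lune, and that the extra test $|r_1\ell_1|\le|pq|$ never falsely rejects a polygon whose diameter is $|pq|$ --- rather than proving the antipodality claim as stated.
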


\begin{proof}
  Consider the line $h$ parallel to $\overrightarrow{r_1r_2}$ through
  $\ell_1$, and let $h^+$ be the halfplane not containing $r_1$. We
  will prove that $h^+$ is empty. Since the halfplane right of
  $\overrightarrow{r_1,r_2}$ is also empty (since $r_1r_2$ is an edge
  of $C$), it then follows $\ell_1$ is antipodal to both $r_1$ and
  $r_2$.
  \begin{figure}[bt]
    \centering
    \includegraphics{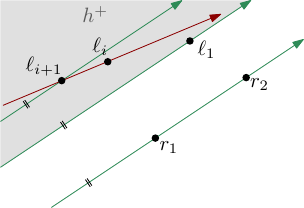}
    \caption{The halfplane $h^+$ must be empty, and thus $\ell_1$ is
      antipodal with $r_1$ and $r_2$.}
    \label{fig:l1_antipodal_proof}
  \end{figure}
  
  Assume, by contradiction, that there is some point, and thus a
  vertex, of $C$ in $h^+$. Let $\ell_{i+1}$ be the vertex in
  $C \cap h^+$ furthest from the line through $\overrightarrow{r_1r_2}$,
  see Figure~\ref{fig:l1_antipodal_proof}. Observe that this is indeed a
  vertex on the left, as either $p$ or $q$ is further from the line through
  $\overrightarrow{r_1r_2}$ than any vertex $r_j$. Consider the line
  parallel to $\overrightarrow{r_1r_2}$ through $\ell_{i+1}$, oriented
  in the same direction as $\overrightarrow{r_1r_2}$. By definition
  of $\ell_{i+1}$, the point $\ell_i$ is to the right of this directed
  line. It now follows that either (i) the positive turn
  angle of $\overrightarrow{\ell_{i+1}\ell_i}$ is smaller than that of
  $\overrightarrow{r_1r_2}$, or (ii) the direction vector of
  $\overrightarrow{r_1r_2}$ is pointing upwards while the direction
  vector of $\overrightarrow{\ell_{i+1}\ell_i}$ is pointing downwards. By
  definition of $t_1$, $\overrightarrow{r_1r_2}$ has minimum turning
  angle, and thus we are in the latter case. However, that means that
  the $y$-coordinate of $\ell_i$ is smaller than the $y$-coordinate of
  $\ell_{i+1}$. Since we make a right turn at every vertex
  $\ell_i,..,\ell_1,q$ while going from $\ell_{i+1}$ towards
  $\ell_0=q$ (and we turn at most $\pi/2$ in total) it follows that
  $\ell_{i+1}$ has a larger $y$-coordinate than $q$. This contradicts
  that $pq$ is vertical and has maximum diameter. We thus conclude
  that $\ell_1$ is the vertex furthest from $\overrightarrow{r_1r_2}$,
  and is thus antipodal with $r_1$ and $r_2$.
\end{proof}

As a consequence of this lemma, $r_1$ is not antipodal to $\ell_2$, so removing $r_1$ from consideration the recursive problem does not remove any antipodal pairs in $C$ that must be checked.

\begin{restatable}{lemma}{lemOptimalConsidered}
    An optimal solution with $pq$ as diameter is considered during the algorithm.
\end{restatable}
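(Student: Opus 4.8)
The plan is to prove the statement by induction on the number of triangles in the two-fan triangulation of the solution, peeling off the first triangle in the turn-angle order and matching each peel to one term of the $R$- or $L$-recurrence. It is convenient to strengthen the claim: I would show that for \emph{every} convex polygon $C$ with $pq$ as diameter and exactly $k$ points, the value computed by the recurrence for the entry keyed by $C$'s first triangle is at most $A(C)$. Applying this to the optimal solution $C_{\opt}$ for the given pair $p,q$ then shows that the minimum over all $R$- and $L$-entries is at most $A(C_{\opt})$, i.e.\ that $C_{\opt}$ is considered. Note that \cref{lem:l1_antipodal} is stated for an optimal solution, but its proof uses only that $pq$ is the vertical diameter of $C$ with $q$ the topmost vertex; since any sub-polygon obtained by peeling still has $p$ and $q$ as its lowest and highest vertices and $pq$ as a diametral pair, the lemma applies verbatim to all polygons arising in the induction, so no appeal to optimal substructure is needed for this direction.

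For the base case, $C$ is a triangle $\triangle pvq$; as $pq$ is vertical, the third vertex $v$ lies strictly to one side and is therefore either the ccw neighbour $r_1$ of $p$ or the ccw neighbour $\ell_1$ of $q$, so $C$ is captured exactly by the base entry $R[r_1,q,p,k]=A(p,r_1,q)$ or its symmetric $L$-analogue. For the inductive step, let $t_1$ be the first triangle in the peeling order. If $t_1=\triangle pr_1r_2$, then by \cref{lem:l1_antipodal} the vertex $\ell_1$ is antipodal to both $r_1$ and $r_2$, so $C$ meets the antipodality condition in the definition of $R[r_1,r_2,\ell_1,k]$, and since $C$ has diameter $|pq|$ it also satisfies $|r_1\ell_1|,|r_2\ell_1|\le|pq|$ and is not discarded. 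Let $C'=C\setminus t_1$ be the polygon obtained by deleting $t_1$ together with vertex $r_1$; it is convex, still has $pq$ as diameter, has $r_2$ as the ccw neighbour of $p$ and $\ell_1$ as the ccw neighbour of $q$, and contains $k-K(p,r_1,r_2)-1$ points. Its peeling order is exactly that of $C$ with $\overrightarrow{r_1r_2}$ removed (the remaining opposite edges are the same segments with unchanged directions), so the first triangle of $C'$ is the second triangle of $C$: if that triangle is $\triangle pr_2r_3$, then \cref{lem:l1_antipodal} applied to $C'$ gives that $\ell_1$ is antipodal to $r_2,r_3$, matching $R[r_2,r_3,\ell_1,\,\cdot\,]$, which appears in the right-step option of the recurrence for $R[r_1,r_2,\ell_1,k]$; if instead it is $\triangle q\ell_1\ell_2$, then the symmetric lemma gives that $r_2$ is antipodal to $\ell_1,\ell_2$, matching $L[\ell_1,\ell_2,r_2,\,\cdot\,]$, which appears in the left-step option. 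In both cases the induction hypothesis bounds the relevant sub-entry by $A(C')$, whence $R[r_1,r_2,\ell_1,k]\le A(p,r_1,r_2)+A(C')=A(C)$. The case $t_1=\triangle q\ell_1\ell_2$ is symmetric and yields the same bound for an $L$-entry.

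I expect the main obstacle to be the bookkeeping that keeps the triangulation, the peeling order, and the antipodal partner tracked in the third index of $R$/$L$ all synchronised across the $R\!\to\!R$, $R\!\to\!L$, $L\!\to\!R$, and $L\!\to\!L$ transitions. The delicate point is precisely the change of antipodal partner at an $R\!\to\!L$ step: after peeling $\triangle pr_1r_2$ the newly exposed right vertex $r_2$ must be the antipodal partner of the next left edge $\ell_1\ell_2$, which is exactly what \cref{lem:l1_antipodal} applied to $C'$ delivers, while \cref{lem:antipodal_structure} and the consequence noted after \cref{lem:l1_antipodal} (that $r_1$ is not antipodal to $\ell_2$) certify that deleting $r_1$ neither destroys nor leaves unchecked any antipodal pair of $C$, so the single tracked partner suffices. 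A secondary technical point is the no-parallel-sides hypothesis inherited from the two structural lemmas, which I would discharge by a standard symbolic perturbation of $P$, or by observing that parallel sides create only $O(1)$ additional antipodal pairs that are caught by also testing the caliper's next candidate.
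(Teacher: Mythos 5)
Your proposal follows essentially the same route as the paper's own proof: induction on the number of triangles in the two-fan triangulation, peeling the first triangle in turn-angle order, invoking Lemma~\ref{lem:l1_antipodal} to certify the antipodality condition in the $R$/$L$ entry, and splitting on whether the next triangle lies on the right or the left. Your strengthening of the induction hypothesis to arbitrary (not just optimal) sub-polygons, with the observation that Lemma~\ref{lem:l1_antipodal} only needs $pq$ to be a vertical diametral pair, is a welcome tightening of a step the paper leaves implicit, but it does not change the argument's structure.
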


\begin{proof}
  Let $C$ be an optimal solution, and let its triangulation be as
  described above. Also assume (by symmetry) that the first triangle
  in the sequence $t_1=\triangle pr_1r_2$. Observe that $C$ thus has:
  (i) $pq$ as a diameter, (ii) $r_1$ as the ccw neighbor of $p$, (iii)
  $r_2$ as the ccw neighbor of $r_1$, (iv) $\ell_1$ as the ccw
  neighbor of $q$, (v) $k$ points inside it (for some
  $k \in \mathbb{N}$), and (vi) $\ell_1$ antipodal with both $r_1$ and
  $r_2$ (by Lemma~\ref{lem:l1_antipodal}). Therefore, by definition of
  $R$, the area of $C$ is $R[r_1,r_2,\ell_1,k]$. Symmetrically, it
  follows that if $t_1$ lies on the left it is a solution to an $L$
  subproblem. We now establish by induction that our definitions of
  $R$ and $L$ are correct.

  If $t_1$ is the only triangle in $C$, it thus follows that we are in
  a base case, and $C=t_1$ indeed has area
  $R[r_1,r_2,\ell_1,k]=A(p,r_1,r_2)=A(p,r_1,q)$ and contains
  $k=3+K(p,r_1,r_2)$ points.

  If there exists a triangle $t_2$, and this triangle also lies on the
  right, then it directly follows from the induction hypothesis that
  the area of $C$ without triangle $t_1$ is
  $R[r_2,r_3,\ell_1,k-K(p,r_1,r_2)-1]$, and thus our definition of
  $R[r_1,r_2,\ell_1,k]$ correctly computes the area of
  $C$. Symmetrically, if $t_2$ lies on the left, the induction
  hypothesis gives us that the area of $C$ without triangle $t_1$ is
  $L[\ell_1,\ell_2,r_2,k-K(p,r_1,r_2)-1]$. It again follows that our
  definition of $R[r_1,r_2,\ell_1,k]$ is correct.
\end{proof}

\begin{restatable}{lemma}{lemNoIlligal}
    No solution is determined that is not convex or where $p,q$ is not a diameter.
  \end{restatable}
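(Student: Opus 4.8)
The plan is to prove \emph{soundness} as the dual of the two preceding completeness lemmas. Concretely, I will show by induction on the recursion that every \emph{finite} table entry $R[r,r',\ell,k]$ and $L[\ell,\ell',r,k]$ equals the area of an \emph{actual} convex polygon that has $pq$ as a diameter, contains exactly $k$ points of $P$, and realizes the prescribed neighbor relations. Since the polygon the algorithm eventually reports is a minimum over such entries, establishing the claim for every finite entry immediately yields the lemma. The induction is on the number of triangles that have been glued on (equivalently, the recursion depth), and at each step I carry two invariants simultaneously: convexity of the reconstructed polygon, and the property that $pq$ is a diameter.

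For convexity, the base case is immediate: an entry $R[r_1,q,p,k]=A(p,r_1,q)$ is the area of the triangle $\triangle p r_1 q$, which is convex, and symmetrically for $L$. For the inductive step consider the right-step recurrence $R[r,r',\ell,k]=A(p,r,r')+\min_{r''}R[r',r'',\ell,k-K(p,r,r')-1]$. By the induction hypothesis the chosen subproblem corresponds to a convex polygon $C'$ with $r'$ as ccw neighbor of $p$ and $r''$ as ccw neighbor of $r'$; forming the polygon $C$ for $R[r,r',\ell,k]$ amounts to inserting $r$ between $p$ and $r'$, i.e., gluing $\triangle p r r'$ onto $C'$ along the chord $pr'$. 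Convexity at the exposed junction vertex $r'$ is exactly the condition that $r''r'r$ is a right turn, while convexity at $r$ is the admissibility condition on the index triple (that $r'$ lie to the left of $\overrightarrow{pr}$, making $p,r,r'$ a left turn); convexity at the remaining vertices is inherited from $C'$, and convexity at $p$ and $q$ is automatic since they are the unique lowest and highest points with their chain-neighbors strictly to the right and left of line $pq$. The left-step recurrence and the $L$ table are symmetric, so every finite entry reconstructs a convex polygon.

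The diameter invariant is where the real work lies. Because every selected point lies in the lune, both $p$ and $q$ are within distance $|pq|$ of every vertex, and (as already observed) any two vertices on the same chain are within $|pq|$ of each other; hence the only pairs that could exceed $|pq|$ are antipodal pairs with one endpoint on each chain, and since the diameter of a convex polygon is attained by an antipodal pair it suffices to bound every such pair by $|pq|$. I argue this inductively: by hypothesis all antipodal pairs of $C'$ have length at most $|pq|$, and inserting $r$ can only create new antipodal pairs involving $r$ (or changing the partner of $r'$). By Lemma~\ref{lem:antipodal_structure} (whose no-parallel-sides hypothesis we may assume generically), in $C$ the new vertex $r$ can be antipodal only to $q$, only to $\ell$, or to both; the distances $|rq|$ and $|rp|$ are at most $|pq|$ by the lune, so the only cross-chain distance that is not automatically bounded is $|r\ell|$, which is precisely the quantity the guard $|r\ell|\le|pq|$ (and, for $r'$, the guards $|r'\ell|\le|pq|$ and $|r''\ell|\le|pq|$, with their $L$-analogue $|\ell'r'|\le|pq|$) forces to be satisfied on pain of setting the entry to $\infty$. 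Thus a finite entry has all antipodal pairs at most $|pq|$; and since $p,q$ are themselves an antipodal pair at distance exactly $|pq|$, the diameter of $C$ equals $|pq|$ and is realized by $p,q$.

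I expect the main obstacle to be this antipodal bookkeeping: one must argue rigorously that gluing on $\triangle p r r'$ introduces \emph{no} antipodal pair other than those the recurrence tests, so that the finite set of distance guards provably covers the diameter of the enlarged polygon. This is exactly where Lemma~\ref{lem:antipodal_structure} and Lemma~\ref{lem:l1_antipodal} are needed, the latter guaranteeing that the vertex removed when recursing was antipodal only to already-checked partners, so that no required antipodal constraint is silently dropped. By comparison, the convexity half reduces to verifying that the stated turn and side conditions forbid reflex angles and cross-chain self-intersection, which is routine.
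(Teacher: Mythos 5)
Your overall strategy---soundness by induction on the construction, with convexity enforced by the turn conditions and the diameter bound reduced to checking antipodal pairs---is exactly the spirit of the paper's own proof, which is in fact only three sentences long (``by construction; all antipodal pairs are tested and are no longer than the diameter''). Your convexity half is fine, and your reduction of the diameter question to cross-chain antipodal pairs (same-chain pairs and pairs involving $p$ or $q$ being disposed of by the lune) is also correct.

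The gap is the step where you invoke Lemma~\ref{lem:antipodal_structure} to conclude that ``in $C$ the new vertex $r$ can be antipodal only to $q$, only to $\ell$, or to both.'' That lemma is a disjunction: it guarantees only that \emph{at least one} of its two bullets holds, and it may be the second one ($\ell$ antipodal only to $p$ and/or $r$), in which case it places no restriction on $r$'s antipodal partners and does not exclude a pair such as $(r,\ell_2)$, whose length is never tested by any guard. What actually rules such pairs out is a different mechanism: the entry $R[r,r',\ell,k]$ carries the invariant that $\ell$ is antipodal to both $r$ and $r'$, and since any two antipodal chords of a convex polygon must cross or share an endpoint, the antipodal chord $(r',\ell)$ forbids $(r,\ell_i)$ for $i\ge 2$ from being antipodal (in the cyclic order $p,r,r',\ldots,q,\ell,\ell_2,\ldots$ these two chords neither cross nor share an endpoint). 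So the only new cross-chain antipodal pair is indeed $(r,\ell)$, which the guard tests---but to use this you must verify that the reconstructed polygon actually satisfies the antipodality invariant of its own table entry, i.e., that the side conditions in the recurrence (such as ``$\ell'$ lies right of the line through $\ell$ parallel to $rr'$'') preserve that invariant through the gluing. That verification is exactly the ``antipodal bookkeeping'' you defer at the end; as written, the proof substitutes an inapplicable lemma for it, so the central step of the diameter argument is not yet established. (A minor further imprecision: inserting $r$ also replaces the edge $pr'$ incident to $p$, so $p$ may gain antipodal partners as well; these are harmless by the lune, but the claim that only pairs involving $r$ and $r'$ change is not literally true.)
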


\begin{proof}
  By construction. Since all antipodal pairs are tested and are no
  longer than the diameter, no pair of points can have a larger
  distance. Convexity is by construction.
\end{proof}

The two tables $R$ and $L$ require $O(n^3k)$ storage, each entry is filled in $O(n)$ time if we know the point count in each triangle, and there are $O(n^2)$ choices for $p,q$. Hence we obtain:

\begin{restatable}{theorem}{thmResult}
    Given a set of $n$ points in the plane, a diameter bound $D$, and an integer $k\geq 3$, a smallest convex polygon that has diameter at most $D$ and contains $k$ points inside or on the boundary can be computed in $O(n^6k)$ time and $O(n^3k)$ space. 
\end{restatable}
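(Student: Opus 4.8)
The plan is to split the argument into correctness and resource bounds, assembling the lemmas already proved with a direct count.

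For correctness I would first fix a pair $p,q$ with $|pq|\le D$, restrict $P$ to the lune of $p$ and $q$, and run the $R$/$L$ dynamic program. I claim this computes $\alpha(p,q)$, the minimum area of a convex polygon that has $pq$ as its diameter and contains exactly $k$ points inside or on its boundary. The lemma that an optimal solution with $pq$ as diameter is considered gives that this minimum area equals the minimum of the relevant table entries $R[r_1,r_2,\ell_1,k]$ and $L[\ell_1,\ell_2,r_1,k]$, while the lemma that no illegal solution is determined gives that every value the program produces is the area of a genuine convex polygon in which $pq$ realizes the diameter. I would note that the program's count $k$ already includes the boundary vertices (the $+3$ in the base case and the $+1$ charged for each added triangle), so it matches the ``inside or on the boundary'' count in the statement.

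I would then lift this to the global problem. Let $C^\ast$ be an optimal convex polygon with diameter at most $D$ and $k$ points, and let $p^\ast,q^\ast$ be a diametral pair of $C^\ast$, so $|p^\ast q^\ast|\le D$. Because the algorithm iterates over all pairs $p,q$ with $|pq|\le D$, it reaches $(p^\ast,q^\ast)$, and fixed-pair correctness gives $\alpha(p^\ast,q^\ast)\le\mathrm{area}(C^\ast)$. Conversely, every $\alpha(p,q)$ retained is the area of a convex polygon whose diameter is the length $|pq|\le D$ of the pair it was built for, so it is the area of a feasible polygon and cannot undercut the true optimum. Hence the minimum of $\alpha(p,q)$ over admissible pairs equals $\mathrm{area}(C^\ast)$, and the optimal polygon itself is recovered by standard back-pointers. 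I would also remark that the lemmas assume general position (no three collinear, and for the polygon at hand no two parallel sides); ties and parallel sides are absorbed by a symbolic perturbation, or by observing that the antipodal-pair count stays $O(m)$, neither of which affects the bounds.

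Finally, for the resource bounds, each of $R$ and $L$ is indexed by three points and the count $k$, giving $O(n^3k)$ entries and $O(n^3k)$ storage that is reused across the choices of $p,q$. For every apex $a\in P$ I would precompute the counts $K(a,x,y)$ of all apex-$a$ triangles by sorting $P$ radially about $a$ and using prefix sums, which costs $O(n^3\log n)$ in total and is dominated by the final bound; with these counts in hand each entry is filled by minimizing over a single free vertex ($r''$ or $\ell'$) in $O(n)$ time. One pair therefore costs $O(n^4k)$ time and the $O(n^2)$ pairs give $O(n^6k)$ time overall, with $O(n^3k)$ space. The main obstacle in the development is not this count but the antipodal bookkeeping already discharged by the lemmas, namely that peeling triangles in the merged turn-angle order lets the program certify every cross-chain antipodal pair with a single distance test; granting those lemmas, the only remaining care is the reduction from ``diameter at most $D$'' to exact diametral pairs and the check that preprocessing does not dominate, both routine.
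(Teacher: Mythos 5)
Your proposal is correct and follows essentially the same route as the paper: assemble the already-proved lemmas for fixed-pair correctness, lift to the global optimum by iterating over all candidate diametral pairs, and then count $O(n^3k)$ table entries filled in $O(n)$ time each over $O(n^2)$ pairs. The only (immaterial) divergence is the triangle-count preprocessing, where you use radial sorting with prefix sums in $O(n^3\log n)$ time while the paper counts points vertically below each of the $O(n^2)$ edges in $O(n^3)$ time; both are dominated by the main bound.
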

\begin{proof}
    The algorithm that attains this bound has already been described, except for determining how many points of $P$ lie in any triangle, as used in the recursion. We can do an easy preprocessing step where we determine for each of the $O(n^2)$ edges how many points of $P$ lie vertically below that edge. This allows us to retrieve the point count inside a triangle in constant time by using the three edges of the triangle, and adding or subtracting these counts. This takes $O(n^3)$ time, which is clearly dominated by the time to run the dynamic programming steps.
\end{proof}

The final algorithm from \cite{eorw} can fill in table entries in
constant time. Unfortunately, we were unable to use a similar approach
(e.g.\ rotating around $r$ filling in the $R$-entries for every point
$r'$) as the table entry depends on a second point $\ell'$ on the left
to guarantee the bound on the diameter. It seems hard to maintain the
appropriate point $\ell'$ as $r'$ varies. Also, rotating around $r$ and $\ell$ simultaneously did not seem to work.

By running the algorithm with some value $k$, we also determine the smallest solutions for all values $3,\ldots,k$. This allows us to assume an maximum diameter $D$ and maximum area $A$, and determine the convex polygon with most points inside within the same time bound.
Start with $k=4$, run the algorithm above, and if the area is smaller than $A$, double $k$ and repeat. If the area is larger than $A$ for the current $k$, we can look up the largest point count for which the area is at most $A$.

\begin{corollary}
Given a set $P$ of $n$ points in the plane, a diameter bound $D$ and an area bound $A$, a convex polygon with diameter at most $D$ and area at most $A$ that contains the maximum number of points of $P$ can be computed in $O(n^6k)$ time and $O(n^3k)$ space, where $k$ is the number of points in the optimal solution.
\end{corollary}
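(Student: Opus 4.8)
The plan is to reduce the optimization problem---maximize the number of contained points subject to the area bound $A$ and the diameter bound $D$---to the minimization problem already solved by the theorem, and then to find the right point count by exponential search. Write $f(m)$ for the minimum area of a convex polygon with diameter at most $D$ that contains exactly $m$ points of $P$, setting $f(m)=\infty$ when no such polygon exists. What we want is the largest $m$ with $f(m)\le A$, together with a witness realizing it.

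First I would establish that $f$ is non-decreasing in $m$, which is what makes the exponential search valid. As noted earlier in the paper, a minimum-area witness for $f(m+1)$ may be taken to be the convex hull of the $m+1$ points it contains, with all of them as vertices. Deleting the hull vertex whose removal decreases the area the most yields a convex polygon on exactly $m$ points (the deleted vertex lies strictly outside the new hull by general position, and no interior points are gained) whose area is strictly smaller and whose diameter cannot increase (the diameter is a maximum over pairwise distances among the contained points, and we only removed pairs). Hence $f(m)\le f(m+1)$, so $\{m : f(m)\le A\}$ is a prefix $\{3,\ldots,k\}$, and it suffices to locate its right endpoint $k$.

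Next I would observe that one invocation of the theorem's algorithm with parameter $k$ computes the entire profile $f(3),\ldots,f(k)$, not merely $f(k)$: the tables $R$ and $L$ are indexed by point count, so filling them up to count $k$ fills every layer $3,\ldots,k$, and for each count $m$ the minimum over all entries $R[\cdot,\cdot,\cdot,m]$ and $L[\cdot,\cdot,\cdot,m]$, taken over all diameter pairs $p,q$ with $|pq|\le D$, equals $f(m)$. This supports the search described before the corollary: start with $k=4$, run the algorithm, and if the best area at count $k$ is at most $A$, double $k$ and repeat; otherwise stop and read off from the stored profile the largest count whose minimum area is at most $A$, returning the associated witness polygon.

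Finally I would bound the cost. Letting $k^{*}$ denote the optimal point count, the search halts the first time the doubling value exceeds $k^{*}$, so the largest parameter ever used is $O(k^{*})$. Since one invocation with parameter $k$ costs $O(n^{6}k)$ time and $O(n^{3}k)$ space, summing the time over the geometric sequence $4,8,\ldots,O(k^{*})$ telescopes to $O(n^{6}k^{*})$, while the space is dominated by the single largest run, giving $O(n^{3}k^{*})$; both match the claimed bounds. The only step needing genuine care is the monotonicity argument feeding the search: I expect the main (though modest) obstacle to be verifying that deleting a single hull vertex drops the count by \emph{exactly} one and cannot enlarge the diameter, which rests on the general-position assumption and on minimum-area witnesses being hulls of precisely their contained point sets.
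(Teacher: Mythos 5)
Your proposal is correct and follows essentially the same route as the paper: a single run of the dynamic program with parameter $k$ yields the minimum areas for all counts $3,\ldots,k$, and a doubling search on $k$ locates the optimal count within the same asymptotic time and space bounds. Your explicit monotonicity argument (removing a hull vertex of a minimum-area witness drops the count by exactly one, shrinks the area, and cannot increase the diameter) is a detail the paper leaves implicit---it only states the analogous observation without the diameter constraint in Section~\ref{sec:algorithms}---but it is correct and needed for the search to be valid.
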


\section{Experimental setup}
\label{sec:experiments}

The second part of this paper contains the experimental part of the
research. We answer a few questions relating to the resources used by
the algorithm from~\cite{eorw}, referred to as the area-only (A)
algorithm, and the new version that also enforces an upper bound on
the diameter, referred to as the area-diameter (AD)
algorithm. We use AD$_{d}$ to refer to the AD algorithm with maximum diameter set to $d$.
When testing on data from the application itself, we also compare with the algorithm from~\cite{umcmodel}, referred to as the area selection (AS) algorithm like in their paper. We are interested in answering the following questions:\smallskip
\begin{itemize}
    \item How efficient in terms of running time and storage requirements are the A  and AD$_{d}$ algorithms? How does the AD$_d$ algorithm's efficiency depend on the diameter?
    \item What diameters are realized by the A algorithm in the optimal solutions?
    \item How do the number of enclosed points compare in the optimal solutions of the A and AD$_d$ algorithms?
    \item What are the found results in terms of area, diameter, and point count, of the three algorithms, now including the AS algorithm~\cite{umcmodel}, on real-world data of the application that motivated this research?  
\end{itemize}

Note that the first question relates to the resources used by the algorithms, whereas the other three questions relate to the output features.
To answer these questions, we implemented the algorithms and selected data to be used.
In the next subsections we discuss the data and the implementation aspects. The next section discusses the results and answers the research questions.

\subsection{Data and settings}

We use three types of data for the experiments:
\begin{itemize}
    \item Uniformly distributed point sets inside a square of size $20\times 20$ mm$^2$ of increasing cardinalities $100,\,110, \ldots, 200$.
    \item Normally distributed point sets of 100 points centered inside a square of size $20\times 20$ mm$^2$ of increasing standard deviations ranging from $\sigma=0.5, 1.0, 1.5, \ldots, 6.5$ mm. Points outside the square are discarded and re-generated.
    \item Medical data used in~\cite{umcmodel}. We selected ten data sets with up to $700$ points, in particular the ten most populated sets.
\end{itemize}

We will always search for an area of at most $4$ mm$^2$.\footnote{While \cite{umcmodel} lists 2 mm$^2$ as the desired area of the region, personal communication with the authors of the paper showed that they use 4 mm$^2$ currently in the application.} For the AD algorithm, we examine maximum diameters of $2$, $3$, $4$, $5$, and $6$ mm. Note that for a diameter of $2$ mm, the maximum area possible is $\pi$ mm$^2$, so in this setting, we cannot ``use'' the full allowed area of $4$ mm$^2$.

Since the first two types of data are randomly generated, we generated $100$ point sets in each setting and report averages over these $100$ runs.

The real-world data is obtained by preprocessing the data set provided by the Department of Digital Pathology at UMC Utrecht, containing $1982$ point sets representing the mitotic figures identified in Whole-Slide-Images (WSIs) from actual patients~\cite{umcmodel}. The identification was carried out by the classification model introduced in \cite{aimodel}, which produced $2$-dimensional points with detection accuracies, since the points have a different likeliness of being a mitotic cell. We apply a preprocessing step to filter out any point with detection accuracy below $0.86$ and convert the coordinates from pixels to millimeters. We consider the top $10$ most populated point sets for our experiments. Table \ref{tab:real-data-overview} provides an overview of this data. The average nearest neighbor distances provide an indication of how clustered the points are.

Both the synthetic generation and preprocessing of real-world data are fully automated in a Python script which we provide in a repository.

\begin{table}
  \caption{The ten real-world point sets used in our experiments. We report the data set index (I), number of points $n$, and Average Nearest Neighbor distance (ANN) in mm with standard deviation.}
    \centering
    \begin{tabular}{ccccccc}
    \toprule
    I & $n$ & ANN & \hspace*{1cm}  &  I & $n$ & ANN  \\
    \midrule
    0 & 699 & $0.21\pm0.14$ && 5 & 474 & $0.30\pm0.23$ \\
    1 & 623 & $0.32\pm0.22$ && 6 & 412 & $0.35\pm0.23$  \\
    2 & 594 & $0.28\pm0.19$ && 7 & 382 & $0.34\pm0.23$ \\
    3 & 526 & $0.29\pm0.21$ && 8 & 377 & $0.33\pm0.27$ \\
    4 & 492 & $0.32\pm0.25$ && 9 & 362 & $0.48\pm0.34$  \\
    \bottomrule
    \end{tabular}
\label{tab:real-data-overview}
\end{table}

\subsection{Implementation}

We implemented the A algorithm from~\cite{eorw} and the AD algorithm
in C++\footnote{Our code is available at \url{https://github.com/gianmarcopicarella/area-diameter-algorithm}.}. We implemented
all other necessary data structures and algorithms that are not
already in the std ourselves. The implementations do not use
parallelism and run in a single thread. The experiments were run on a
MacBook Pro using an Apple M1 Pro (3228 MHz) and 16GB (6400 MT/s
LPDDR5) RAM. It runs macOS Sonoma 14.2.1, and we compiled it for 64
bits using clang 19.1.6 with the -O3 option. We used the Google
Benchmark library to set up our experiments and perform data
collection. The data is then post-processed in Python with several
scripts that we provide in our repository.  Regarding the Area
Selection method (AS), we used the Python implementation made available by
\cite{umcmodel}. We use this implementation mostly to compare output,
not to compare resources, although we report times.

The AS algorithm from \cite{umcmodel} first partitions the data in patches of $3\times 3$ mm, allowing overlap so that solutions across the boundary of a patch are not missed. For each patch, if it contains at most $25$ points, a brute-force method determines the optimum. Otherwise, a heuristic is used that takes the convex hull of a patch and incrementally discards points from the convex hull, until the criteria are met. More precisely, it considers each convex hull point $p$, analyzes the decrease in area and cardinality if it were discarded, but also if it were replaced by any point inside the triangle formed by $p$ and its two neighbors (details in \cite{umcmodel}).

The A and AD algorithms use a few optimizations.
First, we preprocess for quickly determining the point count in triangles, so that the dynamic program can avoid spending more than constant time. We use only $O(n^2)$ time for preprocessing and storage by computing the number of points below each of the $\binom{n}{2}$ possible edges.
To represent all table entries of the AD algorithm, we used hashing, since many table entries are not used. In particular, we use an array with $k$ hash tables for $R$ and for $L$: to retrieve $R[r,r',\ell,k']$ we inspect the $k'$-th hash table and use the indices of $r$, $r'$ and $\ell$ as the key, and similar for $L$.
In the A algorithm, most table entries are used, so the A algorithm does not benefit as much from hashing. Hence it is not used.
Finally, both the A and AD algorithms use an early exit strategy. For the AD algorithm point pairs are treated in decreasing order of number of points in the lune. If a solution of cardinality $k$ is known, then all point pairs whose lune contains fewer points can be skipped.
For the A algorithm, the early exit strategy is much less strong, but we can still skip processing points $p$ as bottommost point if there are not enough points above it to improve the current best solution.
Any preprocessing time is included in the timings. The implementations handle degenerate cases like collinear points properly.

\section{Experimental results}
\label{sec:results}

We first present and discuss the results on synthetic data, starting
with the resources needed by the A and AD algorithms. We then discuss the output results. Finally, we discuss the results on real-world data.


\begin{figure}
\centering
    \begin{subfigure}{0.49\textwidth}
        \centering
        \includegraphics[width=0.93\textwidth]{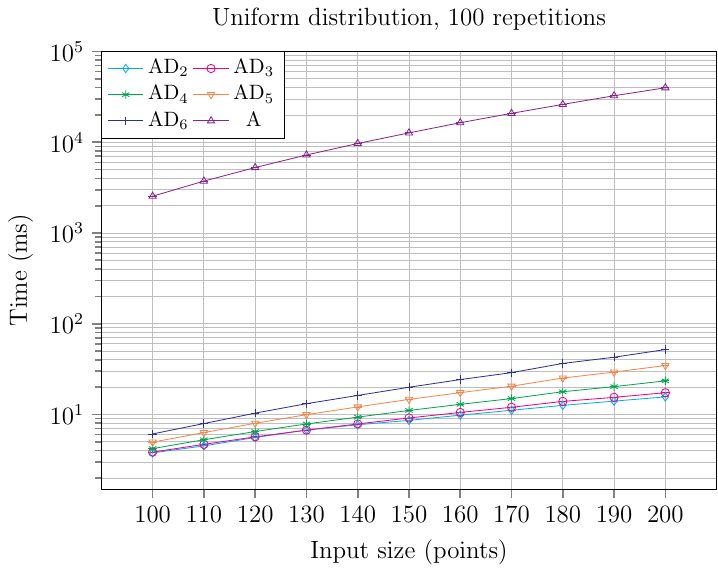}
    \end{subfigure}
    \hfill
    \begin{subfigure}{0.49\textwidth}
        \centering
        \includegraphics[width=0.93\textwidth]{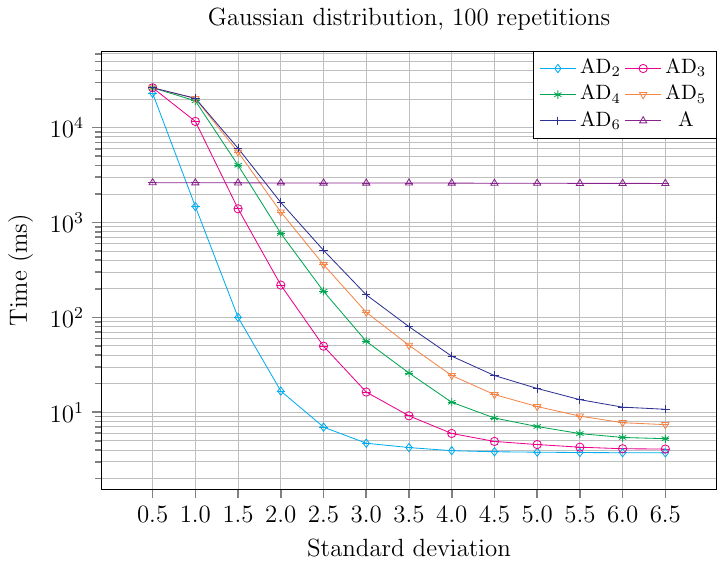}
    \end{subfigure}

    \caption{Left: average running time over $100$ uniformly distributed data sets of sizes $100$ to $200$. Right: average running time over $100$ normally distributed data sets of $100$ points with varying standard deviations.}
    \label{fig:uni-gauss-time-small}
\end{figure}

\begin{figure}
\centering
    \begin{subfigure}{0.49\textwidth}
        \centering
        \includegraphics[width=0.93\textwidth]{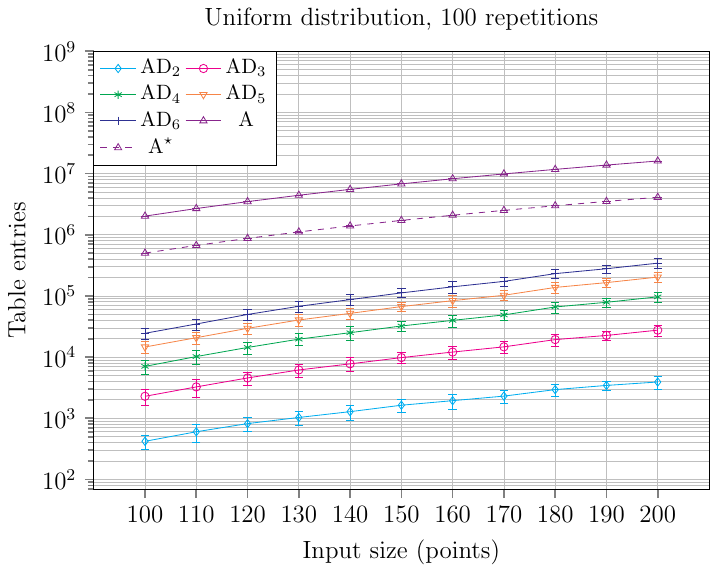}
    \end{subfigure}
    \hfill
    \begin{subfigure}{0.49\textwidth}
        \centering
        \includegraphics[width=0.93\textwidth]{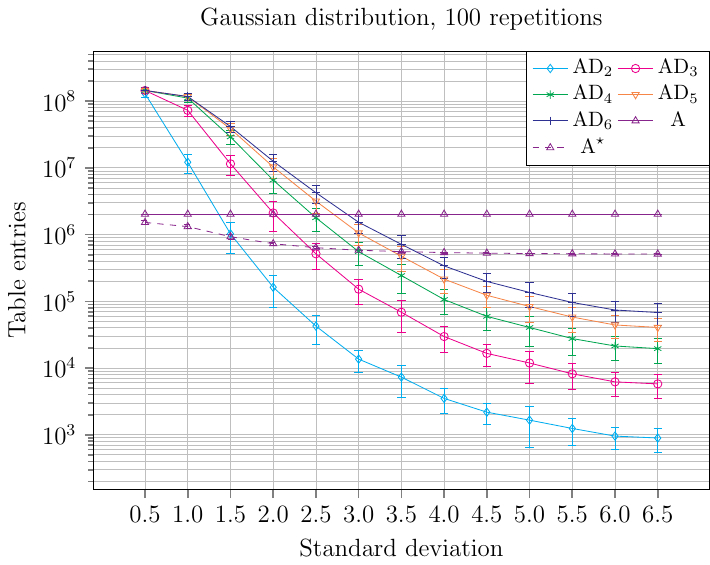}
    \end{subfigure}

    \caption{Left: average number of entries over $100$ uniformly distributed data sets of sizes $100$ to $200$. Right: average number of entries over $100$ normally distributed data sets of $100$ points with varying standard deviations.}
    \label{fig:uni-gauss-entries-small}
\end{figure}

\subparagraph{Resources.}
In Figure~\ref{fig:uni-gauss-time-small}, the running time for synthetic point sets generated with uniform and normal (Gaussian) distributions is displayed. We see the AD algorithm makes very good use of the bounded diameter when compared with the A algorithm. Despite requiring $O(n^6k)$ time worst case, the pruning possible by bounding the diameter in the AD algorithm easily outweighs the much better $O(n^3k)$ running time of the A algorithm. This is true for the point set sizes tested. The situation is different for the normal distribution. When $\sigma \leq 1.5$, the $100$ points are clustered around the center and the AD algorithm must try many more options, leading to a worse performance than the A algorithm. Both for the uniform and the Gaussian distribution, we can clearly see that the AD algorithm is more efficient for smaller diameters than for larger diameters, which is expected.
In Appendix~\ref{app:moredense} we show results on larger point sets, up to $500$ points, for the uniform distribution, but due to computational demands we averaged over only $5$ repetitions. The patterns appear the same; the AD algorithms are still more efficient.

To measure the memory usage of the algorithms we consider the number of table entries that are used by the algorithms. See
Figure~\ref{fig:uni-gauss-entries-small}. The results are similar to those of the running times. In the basic implementation of the A algorithm, a fixed size table is allocated, and the number of entries does not depend on the specific point set, only on its size. To also let the A algorithm benefit from fewer entries, we consider a version of the A algorithm that stores repeated entries found during the rotational sweep around $q$ just once. Whenever we use the same $T^*_q$ to fill $T[q,r,k]$, we can store it as a sequence of values that are all $T^*_q$, at the expense of extra query time for table entries. This is shown by the dashed purple curve. We would use a factor $3$--$4$ fewer entries, except for the Gaussian distribution with lower values of $\sigma$.  We also studied the memory usage instead of the number of entries in the tables, but since
this showed the same patterns, we left out these graphs.

\subparagraph{Output.}
We can compare the results of running the algorithms on cardinality (point count), area, and diameter. Since all algorithms are optimal for their settings, it is clear that the cardinality for the A algorithm is highest, followed by the cardinalities of the AD algorithm by decreasing diameter.


The output numbers cardinality, area, and diameter, averaged, are shown in Figures~\ref{fig:uni-gauss-count-small}--\ref{fig:uni-gauss-diameter-small} for the uniform and normal distributions.

Indeed the A algorithm finds larger cardinality solutions; the differences between different diameter bounds are also clear. It is perhaps surprising that the highest cardinality solutions are typically not compact, but rather elongated. This is confirmed in Figure~\ref{fig:uni-gauss-diameter-small}, where the A algorithm always finds solutions with diameters $8$--$18$ mm. With an area close to $4$ mm$^2$, these regions must be elongated. Appendix~\ref{app:moredense} shows the results for larger point sets, which confirm the findings up to $500$ points, albeit based on just $5$ repetitions.

\begin{figure}
\centering
    \begin{subfigure}{0.49\textwidth}
        \centering
        \includegraphics[width=0.9\textwidth]{Images/Experiments/Synthetic/uniform_20x20_100-200points_100rep/entries_uniform.pdf}
    \end{subfigure}
    \hfill
    \begin{subfigure}{0.49\textwidth}
        \centering
        \includegraphics[width=0.9\textwidth]{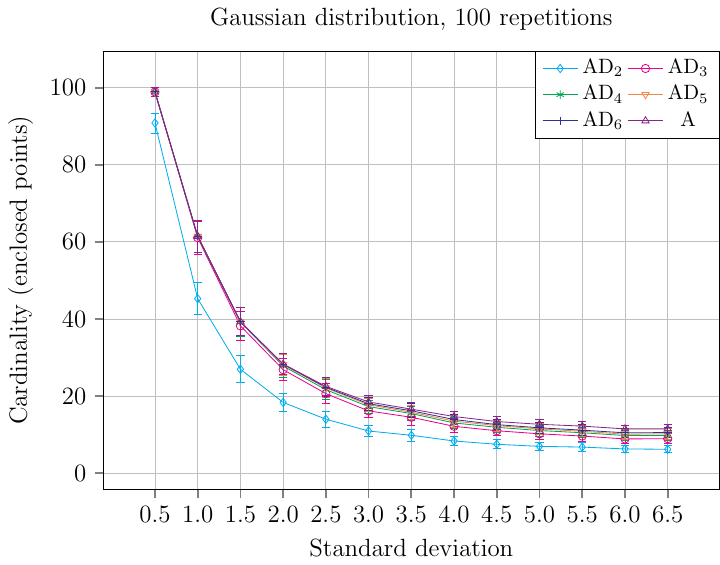}
    \end{subfigure}

    \caption{Left: average cardinality over $100$ uniformly distributed data sets of sizes $100$ to $200$. Right: average cardinality over $100$ normally distributed data sets of $100$ points with varying standard deviations.}
    \label{fig:uni-gauss-count-small}
\end{figure}

\begin{figure}
\centering
    \begin{subfigure}{0.49\textwidth}
        \centering
        \includegraphics[width=0.93\textwidth]{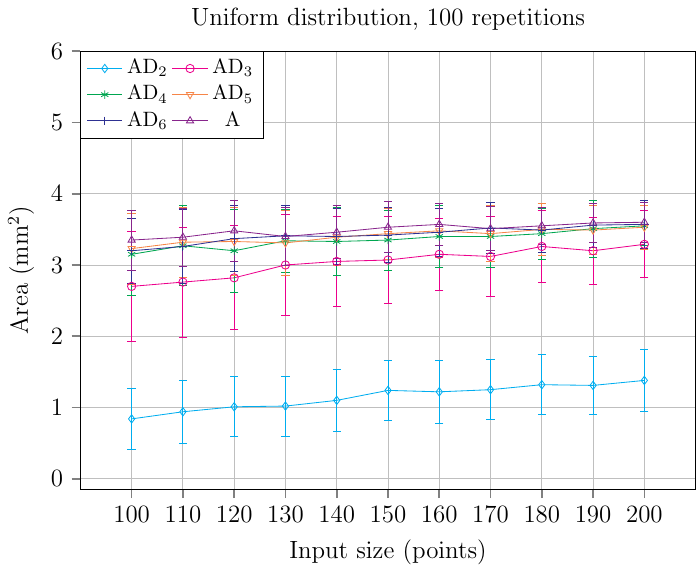}
    \end{subfigure}
    \hfill
    \begin{subfigure}{0.49\textwidth}
        \centering
        \includegraphics[width=0.93\textwidth]{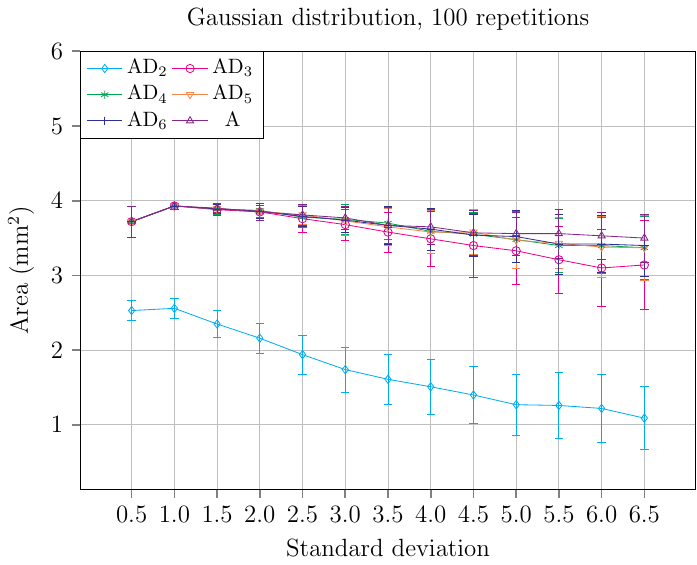}
    \end{subfigure}

    \caption{Left: average area over $100$ uniformly distributed data sets of sizes $100$ to $200$. Right: average area over $100$ normally distributed data sets of $100$ points with varying standard deviations.}
    \label{fig:uni-gauss-area-small}
\end{figure}

\begin{figure}
\centering
    \begin{subfigure}{0.49\textwidth}
        \centering
        \includegraphics[width=0.93\textwidth]{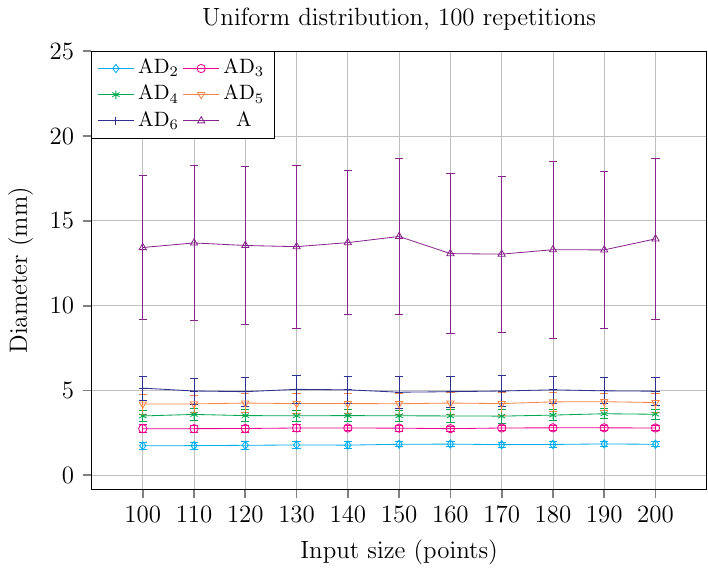}
    \end{subfigure}
    \hfill
    \begin{subfigure}{0.49\textwidth}
        \centering
        \includegraphics[width=0.93\textwidth]{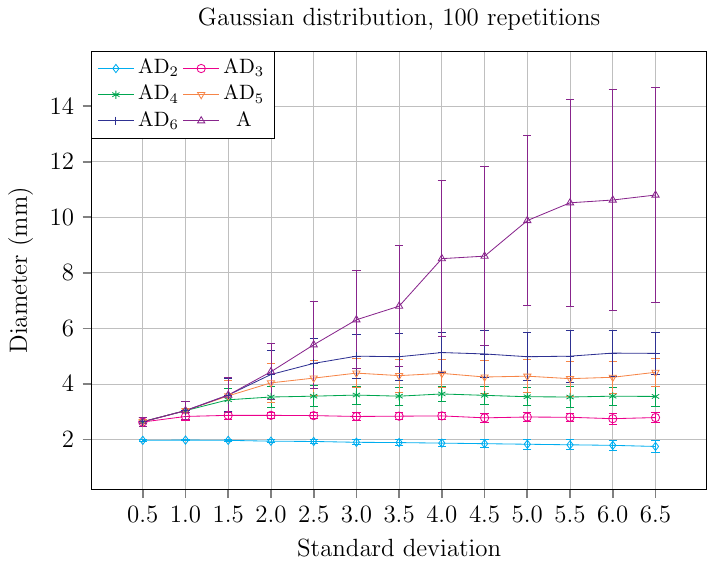}
    \end{subfigure}

    \caption{Left: average diameter over $100$ uniformly distributed data sets of sizes $100$ to $200$. Right: average diameter over $100$ normally distributed data sets of $100$ points with varying standard deviations.}
    \label{fig:uni-gauss-diameter-small}
\end{figure}

\begin{table}
  \caption{The runtime (seconds) of the Area-Diameter (AD) algorithm, the area selection (AS) algorithm and the A algorithm applied to the real-world data summarized in Table~\ref{tab:real-data-overview}. AS is run for step sizes $s=1.98$ and $s=0.5$; A is run within each patch generated with step size $s=0.5$.}
    \centering
    \begin{tabular}{ccccc}
    \toprule
    I & \textbf{AD$_4$} & \textbf{AS ($s=1.98$)} & \textbf{AS ($s=0.5$)} & \textbf{A ($s=0.5$)}\\
    \midrule
    0 & 494.19 & \textbf{5.24}  & 72.83  & 13.7 \\
    1 & 14.7   & 17.15 & 274.82 & \textbf{2.84} \\
    2 & 29.38  & 19.89 & 218.36 & \textbf{3.66} \\
    3 & 24.43  & 19.12 & 263.07 & \textbf{3.17} \\
    4 & 8.83   & 13.83 & 229.33 & \textbf{2.08} \\
    5 & 21.76  & 9.16  & 141.13 & \textbf{4.51} \\
    6 & 2.08   & 9.32  & 161.07 & \textbf{1.63} \\
    7 & 6.19   & 13.66 & 174.71 & \textbf{1.82} \\
    8 & 37.18  & 9.66  & 77.41  & \textbf{2.58} \\
    9 & \textbf{0.19}   & 4.83  & 52.02  & 1.47 \\
    \bottomrule
    \end{tabular}
\label{tab:real-data-time-comparison}
\end{table}

\subparagraph{Real-world data.}
We consider the output of the algorithms on the data that inspired
this research, where we include the area selection (AS) algorithm from~\cite{umcmodel}.
That algorithm uses patches for efficiency reasons, and to be able to run an exact, exponential-time algorithm on patches that contain few points. Interestingly, patches are also a way to limit the diameter that can be found by the A algorithm, namely at most the diameter of a patch. Since patches are $3\times 3$ mm, the A algorithm and the AS algorithm may find regions with diameter up to $4.24$ mm. We use a diameter of $4$ mm in the AD algorithm, which does not use patches.

Patches are shifted with overlap to cover the whole region. We show results of the AS algorithm with a rather large step size of $1.98$ mm, and results of the A and AS algorithms with a step size of $0.5$ mm.\footnote{The step size of $1.98$ mm is based on the actual usage of the algorithm currently by the authors of \cite{umcmodel}.}

The results are shown in Tables~\ref{tab:real-data-time-comparison} and~\ref{tab:comparison}. In the ten examples, the four algorithms usually find the same rough region, but seldom the exact same convex solution. One example is shown in Figure~\ref{fig:4alg}; all ten data sets are in Appendix~\ref{app:realdata}. In two cases, the AS algorithm with $s=1.98$ finds a very different region, confirming that that step size of a moving window matters and must be set small enough.
The results of the two AS methods show how many more points can be enclosed by using a smaller step size. The results of AS and A with step size $s=0.5$ show how well the AS heuristic works compared to the optimal algorithm A (optimal per patch). The timings show that A is faster than AS, while being optimal per patch as well. While AD$_4$ finds better solutions without patches, it is clearly slower.

\begin{table}[t]
    \caption{Comparison of the results obtained by applying the
      area-diameter algorithm (AD), the area selection method (AS) and
      the A algorithm to all patches of the point
      sets summarized in Table~\ref{tab:real-data-overview}. AS is run for
      step sizes $s=1.98$ and $s=0.5$; A is run for step size $s=0.5$. All
      the values for area and diameter are expressed in
      $\text{mm}^2$ and $\text{mm}$ respectively. Best counts are shown bold; values $k$ and $\hat{k}$ use different diameter bounds and are not directly comparable.
    }
    \centering
    \begin{tabular}{c@{\hskip 3\tabcolsep}ccc@{\hskip 3\tabcolsep}ccc@{\hskip 3\tabcolsep}ccc@{\hskip 3\tabcolsep}ccc}
        \toprule
        I & \multicolumn{3}{@{}c@{\hskip 3\tabcolsep}}{\textbf{AD$_4$}} &
            \multicolumn{3}{@{}c@{\hskip 3\tabcolsep}}{\textbf{AS ($s=1.98$)}} &
            \multicolumn{3}{@{}c@{\hskip 3\tabcolsep}}{\textbf{AS ($s=0.5$)}} &
            \multicolumn{3}{@{}c@{\hskip 3\tabcolsep}}{\textbf{A ($s=0.5$)}} \\
      \cmidrule(lr{\dimexpr 3\tabcolsep+0.5em}){2-4}
      \cmidrule(lr{\dimexpr 3\tabcolsep+0.5em}){5-7}
      \cmidrule(lr{\dimexpr 3\tabcolsep+0.5em}){8-10}
      \cmidrule(lr){11-13}
        & $k$ & area & dia. & $\hat{k}$ & area & dia. & $\hat{k}$ & area & dia. & $\hat{k}$ & area & dia. \\
        \midrule
        0 & \textbf{76} & 3.99 & 3.15 & 68 & 3.42 & 2.89 &70 & 3.74 & 2.88 & 74 & 3.9 & 3.07\\
        1 &\textbf{42} & 3.83 & 3.99 &  36 & 3.84 & 2.74 & 39 & 3.97 & 2.9 & 40 & 3.94 & 2.87\\
        2 & \textbf{39} & 3.93 & 2.72 &  33 & 3.99 & 2.85 & \textbf{39} & 3.98 & 2.66 & \textbf{39} & 3.93 & 2.72\\
        3 & \textbf{42} & 3.94 & 3.1 &  34 & 3.99 & 2.71 & 39 & 3.97 & 3.15 & 40 & 3.88 & 3.1\\
        4 &\textbf{42} & 3.88 & 2.99 &  38 & 3.87 & 2.82 & 40 & 3.99 & 2.83 & \textbf{42} & 3.98 & 2.88\\
        5 & \textbf{44} & 3.88 & 3.7 & 37 & 3.82 & 3.31 & 39 & 3.76 & 3.17 & 39 & 3.82 & 3.01\\
        6 &\textbf{31} & 3.98 & 3.85 & 27 & 3.72 & 2.73 & 28 & 4.0 & 3.45 & 28 & 3.94 & 3.38\\
        7 &\textbf{34} & 3.89 & 3.22 &  30 & 3.94 & 2.65 & 32 & 3.92 & 3.22 & \textbf{34} & 3.92 & 3.22\\
        8 & \textbf{61} & 3.91 & 3.94 & 57 & 3.86 & 2.67 &59 & 3.97 & 2.72 & 60 & 3.92 & 3.01\\
        9 &\textbf{28} & 3.86 & 3.49 & 23 & 3.97 & 3.41 &27 & 3.97 & 2.74 & 27 & 3.97 & 2.74\\
        \bottomrule
    \end{tabular}
    \label{tab:comparison}
\end{table}

\begin{figure}
\centering
    \begin{subfigure}{0.24\textwidth}
        \centering
        \includegraphics[width=\linewidth]{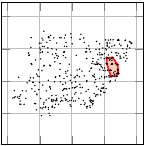}
    \end{subfigure}
    \hfill
    \begin{subfigure}{0.24\textwidth}
        \centering
        \includegraphics[width=\linewidth]{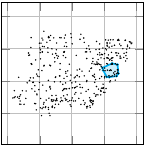}
    \end{subfigure}
    \hfill
    \begin{subfigure}{0.24\textwidth}
        \centering
        \includegraphics[width=\linewidth]{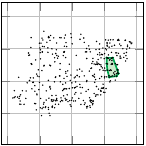}
    \end{subfigure}
    \hfill
    \begin{subfigure}{0.24\textwidth}
        \centering
        \includegraphics[width=\linewidth]{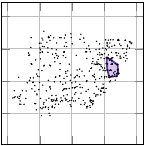}
    \end{subfigure}
\caption{Convex areas found by AD (a), AS using $s=1.98$ (b), AS using $s=0.5$ (c) and A using $s=0.5$ (d) for point set $7$. The line spacing is just for reference and is set to $5$ $\text{mm}$.}
\label{fig:4alg}
\end{figure}

\section{Conclusions and further research}

We have shown that maximum cardinality convex subsets with bounded area and diameter can be computed in polynomial time by a dynamic programming method that embeds rotating calipers. The algorithm runs in $O(n^6k)$ time, a cubic factor higher than the algorithm that does not consider the diameter.


The original application worked with a bounded area and aspect ratio, which is similar but not quite the same as area and diameter. It is unclear whether a polynomial-time algorithm exists that computes maximum cardinality subsets with bounded area and aspect ratio, or bounded area and perimeter. These are open theoretical questions for future research.

We have also shown that the new algorithm may well be faster than the algorithm that ignores diameter in practical situations, because optimal solutions without a diameter constraint are often very elongated. With a bounded diameter, we can often prune significantly. Of course this depends heavily on the specific data set. We have analyzed this experimentally for uniformly distributed data and Gaussian distributed data.

For the data from the application, we have shown that we can compute an optimal solution in the practical situation within a reasonable amount of time, where the problem is phrased as a well-defined optimization problem and the solution need not use patches. 

\bibliography{references}

\clearpage
\appendix

\section{Results for uniformly distributed larger point sets}
\label{app:moredense}

We report on additional experiments that involve larger point sets, up to $500$ points using a uniform distribution. Due to computation time, we have only $5$ repetitions at the moment, so the averages are less consistent.

We can see that the AD algorithms are still more efficient in time than the A algorithm. For the number of table entries, we can see that the results get considerably closer with increasing numbers of points.

The results on the output features cardinality, area, and diameter are similar although they are more erratic, due to the small number of repetitions.
In particular, we notice that the optimal solutions found by the A algorithm, without a limit on the diameter, are still much more elongated.

\begin{figure}[h!p]
    \centering
    \vspace*{1cm}
    \includegraphics[scale=0.90]{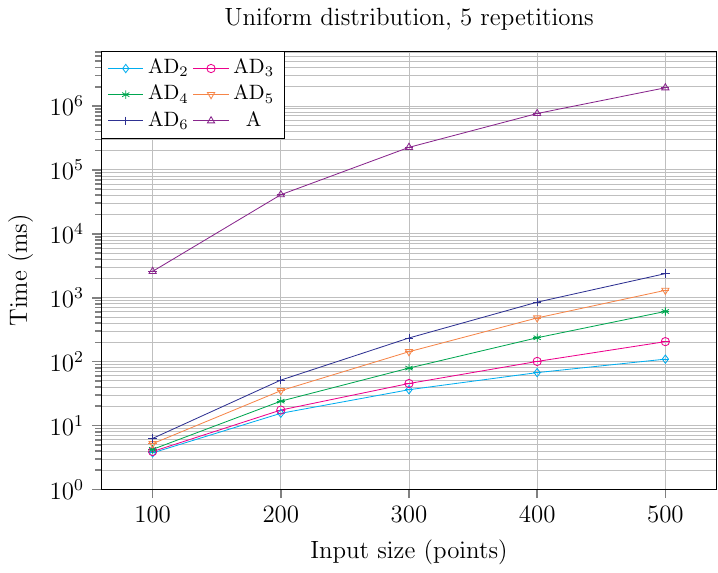}
    \caption{Average running time over $5$ uniformly distributed data sets of sizes $100$ to $500$.}
    \label{fig:uni-time}
\end{figure}

\clearpage


\begin{figure}
    \centering
    \includegraphics[scale=0.9]{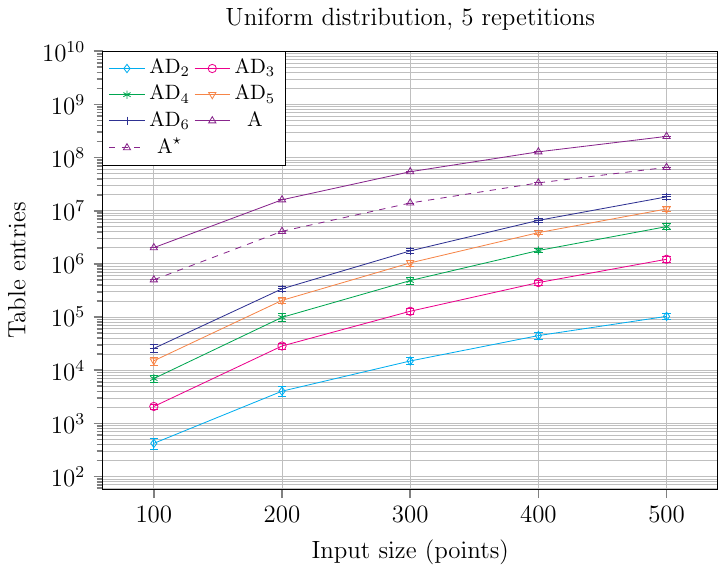}
    \caption{Average number of entries over $5$ uniformly distributed data sets of sizes $100$ to $500$.}
    \label{fig:uni-entries}
\end{figure}

\begin{figure}
    \centering
    \includegraphics[scale=0.9]{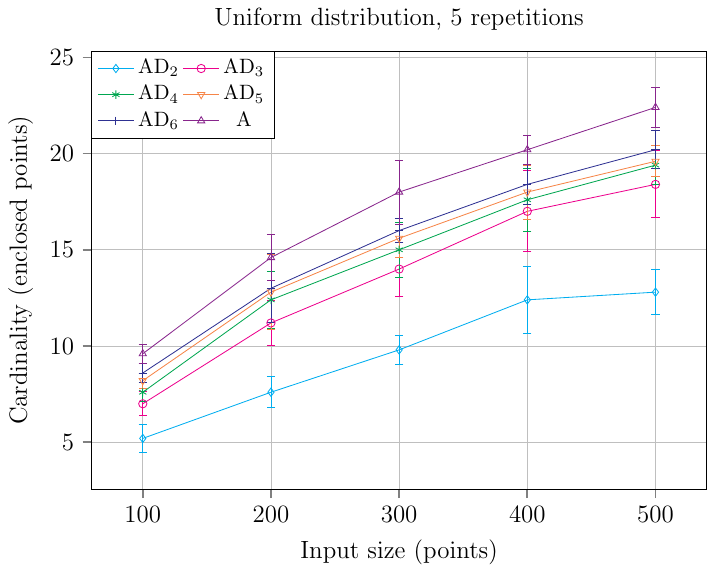}
    \caption{Average cardinality over $5$ uniformly distributed data sets of sizes $100$ to $500$.}
    \label{fig:uni-count}
\end{figure}

\begin{figure}
    \centering
    \includegraphics[scale=0.9]{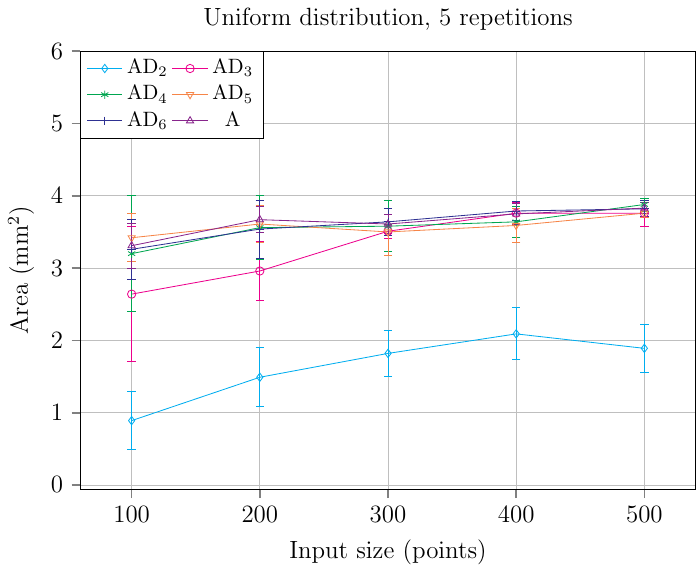}
    \caption{Average area over $5$ uniformly distributed data sets of sizes $100$ to $500$.}
    \label{fig:uni-area}
\end{figure}

\begin{figure}
    \centering
    \includegraphics[scale=0.9]{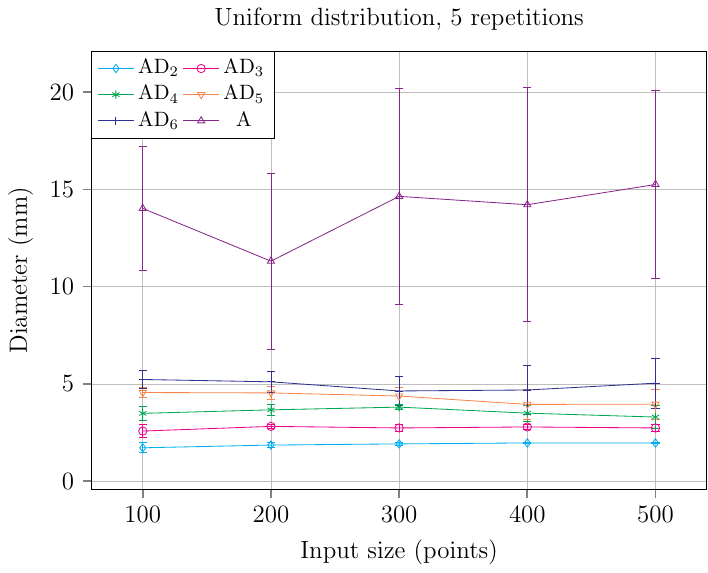}
    \caption{Average diameter over $5$ uniformly distributed data sets of sizes $100$ to $500$.}
    \label{fig:uni-diameter}
\end{figure}

\clearpage

\section{Visual output for the real-world data sets}
\label{app:realdata}

\begin{figure}[H]
\centering
\small
    \begin{subfigure}{0.24\textwidth}
        \centering
        \includegraphics[width=\linewidth]{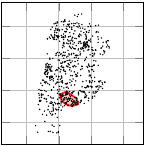}
    \end{subfigure}
    \hfill
    \begin{subfigure}{0.24\textwidth}
        \centering
        \includegraphics[width=\linewidth]{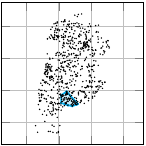}
    \end{subfigure}
    \hfill
    \begin{subfigure}{0.24\textwidth}
        \centering
        \includegraphics[width=\linewidth]{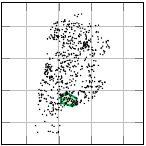}
    \end{subfigure}
    \hfill
    \begin{subfigure}{0.24\textwidth}
        \centering
        \includegraphics[width=\linewidth]{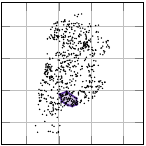}
    \end{subfigure}

    \bigskip

    \begin{subfigure}{0.24\textwidth}
        \centering
        \includegraphics[width=\linewidth]{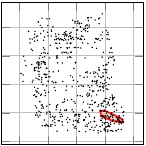}
    \end{subfigure}
    \hfill
    \begin{subfigure}{0.24\textwidth}
        \centering
        \includegraphics[width=\linewidth]{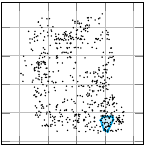}
    \end{subfigure}
    \hfill
    \begin{subfigure}{0.24\textwidth}
        \centering
        \includegraphics[width=\linewidth]{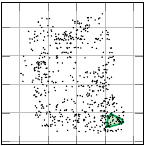}
    \end{subfigure}
    \hfill
    \begin{subfigure}{0.24\textwidth}
        \centering
        \includegraphics[width=\linewidth]{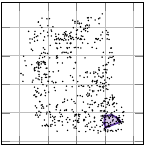}
    \end{subfigure}

    \bigskip

    \begin{subfigure}{0.24\textwidth}
        \centering
        \includegraphics[width=\linewidth]{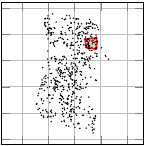}
    \end{subfigure}
    \hfill
    \begin{subfigure}{0.24\textwidth}
        \centering
        \includegraphics[width=\linewidth]{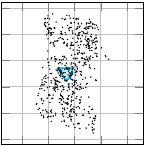}
    \end{subfigure}
    \hfill
    \begin{subfigure}{0.24\textwidth}
        \centering
        \includegraphics[width=\linewidth]{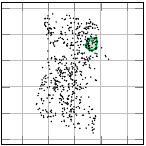}
    \end{subfigure}
    \hfill
    \begin{subfigure}{0.24\textwidth}
        \centering
        \includegraphics[width=\linewidth]{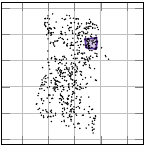}
    \end{subfigure}

    \bigskip

    \begin{subfigure}{0.24\textwidth}
        \centering
        \includegraphics[width=\linewidth]{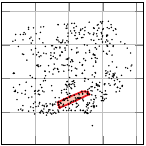}
    \end{subfigure}
    \hfill
    \begin{subfigure}{0.24\textwidth}
        \centering
        \includegraphics[width=\linewidth]{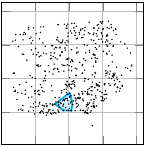}
    \end{subfigure}
    \hfill
    \begin{subfigure}{0.24\textwidth}
        \centering
        \includegraphics[width=\linewidth]{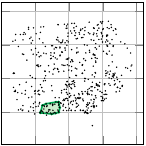}
    \end{subfigure}
    \hfill
    \begin{subfigure}{0.24\textwidth}
        \centering
        \includegraphics[width=\linewidth]{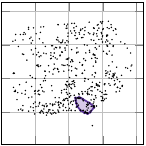}
    \end{subfigure}

    \bigskip

    \begin{subfigure}{0.24\textwidth}
        \centering
        \includegraphics[width=\linewidth]{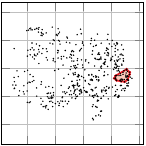}
    \end{subfigure}
    \hfill
    \begin{subfigure}{0.24\textwidth}
        \centering
        \includegraphics[width=\linewidth]{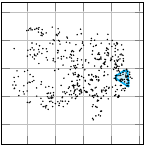}
    \end{subfigure}
    \hfill
    \begin{subfigure}{0.24\textwidth}
        \centering
        \includegraphics[width=\linewidth]{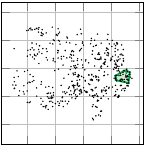}
    \end{subfigure}
    \hfill
    \begin{subfigure}{0.24\textwidth}
        \centering
        \includegraphics[width=\linewidth]{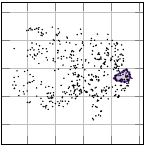}
    \end{subfigure}

    \caption{Convex areas found by $\text{AD}_4$ (red, left column), AS using $s=1.98$ (blue), AS using $s=0.5$ (green) and A using $s=0.5$ (purple) for the real-world point sets with index $I=0,1,2,3,4$. The line spacing is set to $5$ $\text{mm}$.}
    \label{plot:real-solutions-1}
\end{figure}

\begin{figure}[H]
\centering
\small
    \begin{subfigure}{0.24\textwidth}
        \centering
        \includegraphics[width=\linewidth]{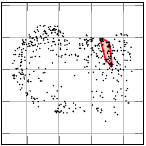}
    \end{subfigure}
    \hfill
    \begin{subfigure}{0.24\textwidth}
        \centering
        \includegraphics[width=\linewidth]{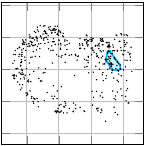}
    \end{subfigure}
    \hfill
    \begin{subfigure}{0.24\textwidth}
        \centering
        \includegraphics[width=\linewidth]{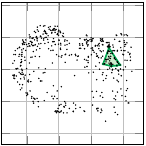}
    \end{subfigure}
    \hfill
    \begin{subfigure}{0.24\textwidth}
        \centering
        \includegraphics[width=\linewidth]{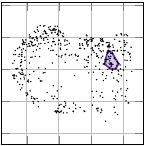}
    \end{subfigure}

    \bigskip

    \begin{subfigure}{0.24\textwidth}
        \centering
        \includegraphics[width=\linewidth]{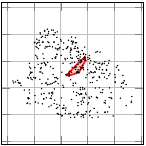}
    \end{subfigure}
    \hfill
    \begin{subfigure}{0.24\textwidth}
        \centering
        \includegraphics[width=\linewidth]{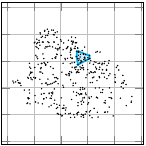}
    \end{subfigure}
    \hfill
    \begin{subfigure}{0.24\textwidth}
        \centering
        \includegraphics[width=\linewidth]{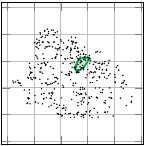}
    \end{subfigure}
    \hfill
    \begin{subfigure}{0.24\textwidth}
        \centering
        \includegraphics[width=\linewidth]{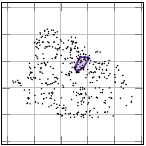}
    \end{subfigure}

    \bigskip

    \begin{subfigure}{0.24\textwidth}
        \centering
        \includegraphics[width=\linewidth]{Images/Experiments/Real/ad_7.pdf}
    \end{subfigure}
    \hfill
    \begin{subfigure}{0.24\textwidth}
        \centering
        \includegraphics[width=\linewidth]{Images/Experiments/Real/as0_7.pdf}
    \end{subfigure}
    \hfill
    \begin{subfigure}{0.24\textwidth}
        \centering
        \includegraphics[width=\linewidth]{Images/Experiments/Real/as1_7.pdf}
    \end{subfigure}
    \hfill
    \begin{subfigure}{0.24\textwidth}
        \centering
        \includegraphics[width=\linewidth]{Images/Experiments/Real/ep_7.pdf}
    \end{subfigure}

    \bigskip

    \begin{subfigure}{0.24\textwidth}
        \centering
        \includegraphics[width=\linewidth]{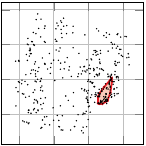}
    \end{subfigure}
    \hfill
    \begin{subfigure}{0.24\textwidth}
        \centering
        \includegraphics[width=\linewidth]{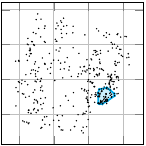}
    \end{subfigure}
    \hfill
    \begin{subfigure}{0.24\textwidth}
        \centering
        \includegraphics[width=\linewidth]{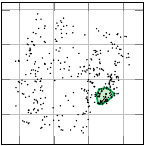}
    \end{subfigure}
    \hfill
    \begin{subfigure}{0.24\textwidth}
        \centering
        \includegraphics[width=\linewidth]{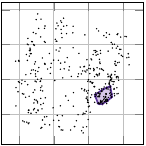}
    \end{subfigure}

    \bigskip

    \begin{subfigure}{0.24\textwidth}
        \centering
        \includegraphics[width=\linewidth]{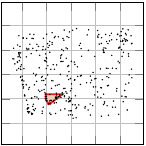}
    \end{subfigure}
    \hfill
    \begin{subfigure}{0.24\textwidth}
        \centering
        \includegraphics[width=\linewidth]{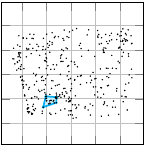}
    \end{subfigure}
    \hfill
    \begin{subfigure}{0.24\textwidth}
        \centering
        \includegraphics[width=\linewidth]{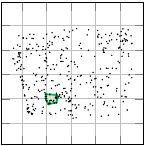}
    \end{subfigure}
    \hfill
    \begin{subfigure}{0.24\textwidth}
        \centering
        \includegraphics[width=\linewidth]{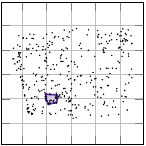}
    \end{subfigure}
    
    \caption{Convex areas found by $\text{AD}_4$ (red), AS using $s=1.98$ (blue), AS using $s=0.5$ (green) and A using $s=0.5$ (purple) for the real-world point sets with index $i=5,6,7,8,9$. The line spacing is set to $5$ $\text{mm}$.}
    \label{plot:real-solutions-2}
\end{figure}

\end{document}